\newcommand{\bs}{\boldsymbol}
\newcommand{\refeq}[1]{(\ref{#1})}
\newcommand{\reffig}[1]{Fig.~\ref{#1}}
\newcommand{\refsec}[1]{Sec.~\ref{#1}}
\newcommand{\refthm}[1]{Theorem~\ref{#1}}
\newcommand{\reflm}[1]{Lemma~\ref{#1}}
\newcommand{\refapp}[1]{Appendix~\ref{#1}}
\newcommand{\refpro}[1]{Proposition~\ref{#1}}
\newtheorem{definition}{Definition}
\newtheorem{lemma}{Lemma}
\newtheorem{proposition}{Proposition}
\newtheorem{theorem}{Theorem}
\newcommand{\Z}{\mathbb{Z}}
\newcommand{\R}{\mathbb{R}}
\newcommand{\Q}{\mathbb{Q}}
\newcommand{\N}{\mathcal{N}}
\begin{document}
%
% paper title
% can use linebreaks \\ within to get better formatting as desired
\title{Diversity of Low-Density Lattices}
\author{Mayur~Punekar,~\IEEEmembership{Member,~IEEE,}
        and~Joseph~Jean~Boutros,~\IEEEmembership{Senior~Member,~IEEE}% 
\thanks{M. Punekar and J. J. Boutros are with the Department
of Electrical and Computer Engineering, Texas A\&M University at Qatar, Education City,
23874 Doha, Qatar. e-mail: \{mayur.punekar, joseph.boutros\}@ieee.org .}% <-this % stops a space
%\thanks{J. Doe and J. Doe are with Anonymous University.}% <-this % stops a space
\thanks{Manuscript received ; revised . 
The material in this paper was presented in part at the International Conference on Telecommunications (ICT), Sydney, Australia, April 2015 and in part at the IEEE International Symposium on Information Theory (ISIT), Hong Kong, June 2015.} }

% The paper headers
\markboth{To be Submitted to IEEE Transactions on Communications}%
{Submitted paper}

% make the title area
\maketitle

\vspace*{-20pt}
\begin{abstract}
%\boldmath
The non-ergodic fading channel is a useful model for various wireless communication channels in both indoor and outdoor environments. 
%In this model, a codeword is divided into multiple blocks such that fading is constant within a block and independent across blocks.
Building on Poltyrev's work on infinite lattice constellations for the Gaussian channel, 
we derive a Poltyrev outage limit (POL) for lattices in presence of block fading.
We prove that the diversity order of this POL is equal to the number of degrees of freedom in the channel.
Further, we describe full-diversity constructions of real lattices defined by their integer-check matrix, i.e., the inverse of their generator matrix.
In the first construction suited to maximum-likelihood decoding, 
these lattices are defined by sparse (low-density) or non-sparse 
integer-check matrices. 
Based on a special structure of the lattice binary image, a second full-diversity lattice construction is described for sparse integer-check matrices in the  context  of  iterative   
probabilistic decoding. Full diversity is theoretically proved in both cases. We also propose a method to construct lattices for diversity order $4$ suitable for iterative probabilistic decoding.
Computer simulation results for diversity order $L=2$ and $L=4$ confirm that the proposed low-density lattices attain the maximal diversity order. 
%An important application is lattice decoding of low-density lattice codes. Under block fading, it is well known that both sphere decoding and iterative decoding of low-density lattices are dramatically complex. 
The newly defined POL is used during this simulation to declare an outage error without decoding, which drastically improves the decoding runtime.

\end{abstract}
\begin{IEEEkeywords}
Lattices, low-density lattice codes, block-fading, Poltyrev limit.
\end{IEEEkeywords}
% For peer review papers, you can put extra information on the cover
% page as needed:
% \ifCLASSOPTIONpeerreview
% \begin{center} \bfseries EDICS Category: 3-BBND \end{center}
% \fi
%
% For peerreview papers, this IEEEtran command inserts a page break and
% creates the second title. It will be ignored for other modes.
\IEEEpeerreviewmaketitle

\section{Introduction}
Coded modulations are nowadays an integral part of almost all communication systems.
Channel coding and linear modulation signal sets can be combined together 
in many different ways \cite{Proakis2008}. Lattices in the real Euclidean space 
are a special case of coded modulations. Lattices are infinite constellation of points 
with a group structure \cite{Conway1998}. 
% Lattices are infinite constellation of points with a group structure \cite{Conway1998}. 
With lattice points perturbed by additive white Gaussian noise, Poltyrev showed that there exists a lattice that can be correctly decoded if and only if the noise variance is less than a threshold \cite{Poltyrev1994}.
This threshold will be called {\em Poltyrev threshold} in the sequel and
corresponds to data transmission at infinite rate. 
Its counterpart for finite-rate data transmission is Shannon capacity \cite{Cover2006}.
Poltyrev's work does not restrict the dimension of the infinite constellation that achieves a vanishing error probability. 
Infinite constellations with a finite dimension over the Gaussian channel were analyzed by Ingber {\it et al.} in \cite{Ingber2013}. 

There are various methods proposed in the literature to build lattices, e.g., lattices from error-correcting codes \cite{Leech1971}, lattices built from groups \cite{Conway1998}, algebraic constructions of lattices \cite{Boutros1996}, etc. \textit{Low-density  lattice codes} (LDLC) are  a special class  of lattices  proposed by Sommer  {\it et  al.}   in~\cite{Sommer2008}  which  can  be  decoded  by  iterative probabilistic   message  passing.    Another  family   of  iteratively decodable lattices has been  published in~\cite{Boutros2014}, where authors described a generalized low-density construction. In the present work, we focus on LDLC without using any shaping region.
% 
% 
% Authors in~\cite{Yona2014} studied the diversity-multiplexing tradeoff of infinite constellations over multiple-antenna fading channels. 

% 
% 
We  consider transmission  using lattices  over a  general non-ergodic fading   (block-fading)   channels  for   single-input   single-output systems. For such a channel model a codeword of length $n$ is divided into $L$  blocks of equal length  $n/L$ such that the  fading within a given  block is  the same,  whereas  it is  different and  independent across different blocks. Such a channel has $L$ degrees of freedom. 
Poltyrev threshold is not valid for such a channel as the outage probability depends not only signal-to-noise ratio (SNR) but also on the channel coefficients. 
If a  lattice code  is used  for  transmission over  such a  block-fading
channel and  the error  probability at  the output  of the  decoder is
proportional to  $1/\gamma^L$ (where  $\gamma$ is  the SNR) then such a lattice code is said to have diversity order $L$ or
referred to  as full-diversity lattice. However,  randomly constructed
lattice codes do not have full-diversity property. To the  best of
the authors' knowledge  no effort has been made in the  literature to design
full-diversity   lattice    codes.   The diversity-multiplexing tradeoff
(DMT) of infinite constellations for multiple-antenna fading channels was 
studied in ~\cite{Yona2014}. The DMT is valid only at high signal-to-noise ratio
and at high rate per real dimension. 

In this paper, first we derive a {\em Poltyrev outage limit} (POL) for lattices over block-fading channels using Poltyrev threshold. We prove that POL has diversity $L$ for a channel with $L$ independent block fadings, i.e., POL has full diversity. One of the most important applications of POL is to detect inadmissible channel states. Inadmissible fadings are deep fadings that cause an outage event. Hence, POL can be used to declare error without decoding the received point when decoder would not be able to decode the transmitted point due to deep fading.
Further, we propose methods  to construct  lattices which exhibit full-diversity when  decoded by maximum-likelihood (ML)  and iterative decoder. 
% The   IC  diversity   is  automatically  inherited   by  any finite-size constellation  extracted from the same  lattice.  
We give conditions for a lattice to be full-diversity under ML decoding  and  propose a full-diversity  construction  method  for lattice codes as stated by \refthm{thm_fulldiv_ml}. 
This construction method can be used to generate sparse and non-sparse  lattice codes  suitable for ML  decoding. 
% Apart from this, one more method to construct full-diversity LDLC suitable for ML decoding is proposed.
Subsequently, the random  LDLC ensemble is  analyzed through the low-density parity-check code derived from its binary image. Following this analysis, a  theorem  for building full-diversity LDLC  is stated. 
This theorem is valid for iterative  probabilistic decoding of LDLC with a sparse
integer-check matrix.
Apart from above mentioned methods, one more method to construct double-diversity LDLC is presented which is valid for both, the ML decoding and also for iterative probabilistic decoding. A generalization of this method is proposed to build LDLC with diversity order $L=4$ for iterative probabilistic decoding.

Simulation results for double-diversity LDLC designed for ML decoding and LDLC constructed for the diversity order $L=2$ and $L=4$ for iterative decoding are also provided which validate our theoretical results.
In our simulations we utilize POL to detect inadmissible channel states and declare error without decoding the received point which reduces the decoder runtime. We demonstrate that the ML decoding of full-diversity LDLC in dimension $n=64$ is not feasible without using POL to declare errors without decoding.

% 
% decoded using ML decoder and iterative decoder.

The  rest of  the paper  is structured  as follows. Background information, notation and the communication model considered in this paper is presented in Section~II. 
Poltyrev outage limit for lattices over block fading channels is derived in Section III. 
Section~IV gives the proof for the diversity order of the Poltyrev outage limit. 
In Section~V, we propose the full-diversity lattice code suitable for ML decoding.  Full-diversity LDLC  suitable  for  iterative decoding  are proposed in Section~VI.  Section~VII  discusses  simulation results.  We conclude the paper in Section~VIII.

\section{Background and Notation}
\subsection{Lattices} \label{sec:lattice_code}
Let $\R$ be the field of real numbers, $\Z$ its ring of integers, and $\Q$ the field of fractions of $\Z$.
A {\em lattice} $\Lambda \subset \R^n$, also called a point lattice, is a free $\Z$-module of rank $n$ in $\R^n$. An element belonging to $\Lambda$ is called a {\em point} or equivalently a vector. Any point 
$\bs{x}~=~(x_1, x_2,\ldots, x_n)^T \in \Lambda$ can be written as an integer linear combination of $n$ points 
%\[
 ${\bs{x}}=\sum_{i=1}^n z_1 {\bs{v}}_i,$ 
%\]
where $\{ {\bs{v}}_i\}$ is a $\Z$-basis of $\Lambda$, $v_{ij} \in \R$, and $z_1 \in \Z$. The $n \times n$ matrix $\mathbf{G}$ built from a basis is a {\em generator matrix} for $\Lambda$. In column convention, let $\mathbf{G}=[v_{ij}]$, then a lattice point is written as $\bs{x}=\mathbf{G}\bs{z}$, where $\bs{z} \in \Z^n$. The {\em fundamental volume} of the lattice $\Lambda$ is given by $|\det(\mathbf{G})|$. 
For more information on lattices, we refer the reader to \cite{Conway1998}.

Poltyrev threshold can be stated as follows: 
there exists a lattice $\Lambda$ of high enough dimension $n$ for which the transmission error probability 
over an additive white Gaussian noise (AWGN) channel can be reduced to an arbitrary low level 
if and only if $\sigma^2 < \sigma_{max}^2$ \cite{Poltyrev1994}\cite{Sommer2008} 
where $\sigma^2$ is the noise variance per dimension and Poltyrev threshold $\sigma_{max}^2$ is given by
\begin{equation}
\label{equ_poltyrev_threshold}
\sigma_{max}^2=\frac{|\det(\mathbf{G})|^{\frac{2}{n}}}{2\pi e}.
\end{equation}

An integer-check matrix of a lattice is the inverse of a generator matrix, $\mathbf{H} = [\mathbf{h}_{ij}] = \mathbf{G}^{-1}$. 
The $n$ integer-check equations for a lattice point $\bs{x}$ are $\sum_{j} \; \mathbf{h}_{ij} \; x_j \in \Z$, for $i=1 \ldots n$.  
\textit{Low-density lattice codes} (LDLC) are a special class of lattices proposed by Sommer {\it et al.} in \cite{Sommer2008}.
A LDLC is defined by a sparse integer-check matrix in order to allow for iterative decoding in high dimensions. We now define the Latin square LDLC considered in this paper.
\begin{definition}[Latin Square LDLC \cite{Sommer2008}]
 A LDLC is called {\em Latin square LDLC} if all the row degrees and column degrees of its integer-check matrix are equal to a common degree $d$ and if every row and column of the integer-check matrix has the same nonzero values, except for a possible change of order and random signs. The sorted sequence of these $d$ values $h_1 \ge h_2 \ge \cdots \ge h_d > 0$ is referred to as the generating sequence of the Latin square LDLC.
\end{definition}

Other families of lattices built from sparse codes have been recently proposed, 
such as lattices from Construction A with non-binary low-density parity-check codes \cite{diPietro2012}\cite{diPietro2013}, 
referred to as LDA lattices. More recently, a powerful family of {\it generalized low-density}  (GLD) lattices for the AWGN channel has been defined by the intersection of repeated and interleaved lattices \cite{Boutros2014}. 
Standard LDLC, LDA, and GLD lattices exhibit no diversity and should be modified as in \refeq{eq:h_fl_2} in presence of block fading.
% 

%%------------------------------------------------------------------
%%------------------------------------------------------------------
\subsection{Non-Ergodic Fading Channel Model} \label{sec_channel_model}
We assume coherent detection with perfect channel state information at the receiver side only.
The fading channel is flat, i.e., there are no multiple paths \cite{Proakis2008}. 
Fading coefficients are real non-negative with a Rayleigh distribution. 
If $\alpha$ denotes a fading coefficient, then $p(\alpha)=2\alpha \exp(-\alpha^2)$ 
or equivalently $p(\alpha^2)=\exp(-\alpha^2)$, for $0 \le \alpha < +\infty$.
It is worth noting that results in this paper do not rely on this particular distribution of fading, 
they are still valid for most usual fading distributions, e.g. the Nakagami distribution of order $m$.

Let $\bs{x}$ be a lattice point. Consider the non-ergodic fading 
where fading coefficients take only $L$ values within a lattice point, $2 \le L < n$.
The non-ergodic BF channel with diversity $L$ has the following mathematical model: 
\begin{equation}
\label{equ_rayleigh_any_L}
y_i=\alpha_j x_i + \eta_i, ~~~j= \left\lceil \frac{i}{n/L} \right\rceil, ~~~i=1,2,\ldots n,
\end{equation}
\noindent
where $\alpha_j$ are independent and identically distributed (i.i.d.) 
Rayleigh distributed fading coefficients and $\eta_i \sim \N(t; 0, \sigma^2)$.

Let $\gamma$ be the signal-to-noise ratio (SNR) for an infinite lattice constellation,
\begin{equation}
\gamma=\frac{|\det(\mathbf{G})|^{\frac{2}{n}}}{\sigma^2}. \label{equ_snr}
\end{equation}
Assuming maximum-likelihood (ML) decoding at the receiver side, let $P_e(\Lambda)$ be the point error probability.
On the BF channel defined by (\ref{equ_rayleigh_any_L}), the diversity order of $P_e(\Lambda)$ is defined
by its slope at high SNR \cite{Proakis2008}
\begin{equation}
\lim_{\gamma \rightarrow \infty} \frac{-\log(P_e(\Lambda))}{\log(\gamma)}
\end{equation}
If $f$ and $g$ have the same diversity order then we denote this by $f(\gamma) \doteq g(\gamma)$,
i.e., $\frac{-\log(f)}{\log(\gamma)}= \frac{-\log(g)}{\log(\gamma)}$ for $\gamma \rightarrow \infty$. 
Similarly we introduce the notation with inequalities $\dot{\le}$ and $\dot{\ge}\;$.

\begin{definition}[Full-diversity Lattice under ML Decoding]
Consider a BF channel with $L$ independent fading coefficients per lattice point.
$\Lambda$ is a full-diversity lattice under ML decoding if $P_e(\Lambda) \doteq \frac{1}{\gamma^L}$.
\end{definition}

%%------------------------------------------------------------------
%%------------------------------------------------------------------
%

%%------------------------------------------------------------------
%%------------------------------------------------------------------
%
\section{Poltyrev Outage Limit For Infinite Lattice Constellations}\label{sec:POL}
% In his work on full-diversity low-density lattice codes, the author of \cite{Boutros2010} could not get numerical results for dimensions above $16$. The sphere decoding algorithm \cite{Viterbo1999} stuck on some lattice points enduring moderately deep or very deep fading.  Also, the LDLC team of \cite{Sommer2008}\cite{Yona2014} could not get numerical results for iterative decoding on the BF channel at high dimensions. Furthermore, the result of a double-diversity infinite lattice constellation in dimension 8 presented in \cite{Boutros2010} was compared to a lattice with no diversity (i.e., diversity order of 1). 
% 
As mentioned earlier, no attempt has been made in the literature to derive
a limit for the BF channel equivalent to that of Poltyrev for the Gaussian channel. In this section, we derive a outage limit, referred to as {\it Poltyrev outage limit}, for lattices over BF channel.

Let $\bs{\alpha} = \textrm{diag} \left( \alpha_{1}, \; . \; , \alpha_{1}, \alpha_{2}, \; .\; , \alpha_{2}, \ldots, \alpha_{L}, \; . \; , \alpha_{L} \right)$ be the $n \times n$ diagonal matrix including the $L$ fading coefficients, 
each repeated $n/L$ times as defined by the model in (\ref{equ_rayleigh_any_L}).
Given the lattice generator matrix $\mathbf{G}$, after going through the BF channel, the new lattice
added to the AWG noise has the generator matrix $\mathbf{G}_{\textrm{BF}}=\bs{\alpha} \mathbf{G}$. 
The fundamental volume becomes
\begin{equation}
\label{eq:det_new_G}
\left|\det\left(\mathbf{G}_{\textrm{BF}}\right)\right| = |\det\left(\mathbf{G}\right)| \times  \prod_{l = 1}^{L} \alpha_l^{n/L}. 
\end{equation}

For a fixed instantaneous fading $\bs{\alpha}$, after combining (\ref{equ_poltyrev_threshold}) 
and~(\ref{eq:det_new_G}), Poltyrev threshold becomes
\begin{equation}
\sigma_{max}^2(\bs{\alpha})=\frac{\prod_{l = 1}^{L} \alpha_l^{2/L} |\det(\mathbf{G})|^{\frac{2}{n}}}{2\pi e}.
\end{equation}
Decoding of the infinite lattice constellation is possible with a vanishing error probability 
if $\sigma^2 < \sigma_{max}^2(\bs{\alpha})$ \cite{Poltyrev1994}. Hence, for variable fading,
an outage event occurs whenever $\sigma^2~>~\sigma_{max}^2(\bs{\alpha})$. 
The Poltyrev outage limit $P_{\textrm{out}}(\gamma)$ is then defined by the following probability
\begin{align}
\label{equ_poltyrev_outage}
 P_{\textrm{out}}(\gamma) & =  
P\left( \sigma^2 > \frac{\prod_{l = 1}^{L} \alpha_l^{2/L} |\det(\mathbf{G})|^{\frac{2}{n}} }{2\pi e} \right) \nonumber \\
& =  P\left(\prod_{l = 1}^{L} \alpha_l^2 < \frac{(2\pi e)^L}{\gamma^L} \right). 
\end{align}

$P_{\textrm{out}}(\gamma)$ does not admit a closed-form expression but it can be numerically estimated via the Monte Carlo method.
The point error rate after lattice decoding, for a given lattice over a BF channel, can be compared
to $P_{\textrm{out}}(\gamma)$ to validate the diversity order and the gap in signal-to-noise ratio. 
But most importantly, the equality $\prod_{l = 1}^{L} \alpha_l^2 =\frac{(2\pi e)^L}{\gamma^L}$ 
defines a boundary in the fading space below which outage events occur.
This boundary shall be called Poltyrev outage boundary. 
In the next section we prove that $P_{\textrm{out}}(\gamma) \doteq \frac{1}{\gamma^{L}}$.
%%------------------------------------------------------------------
%%------------------------------------------------------------------
%
\section{Diversity Order of Poltyrev Outage Limit}
It is well known from Maximum Ratio Combining techniques on fading channels \cite{Proakis2008} that 
$P\left( \sum_{l = 1}^{L} \alpha_l^2 < \frac{1}{\gamma} \right)$ has diversity $L$. 
Is it true for $\prod_{l} \alpha_l^2$? Fortunately, the expression of Poltyrev outage limit in (\ref{equ_poltyrev_outage})
compares the product of squared fadings to $\frac{1}{\gamma^{L}}$, not to $\frac{1}{\gamma}$. 
Roughly speaking, $P_{\textrm{out}}(\gamma)$ behaves like $\left[P\left(\alpha_1^2 < \frac{1}{\gamma}\right)\right]^L$ leading to diversity $L$.
Let us discuss the exact proof.

The proof of $P_{\textrm{out}}(\gamma) \doteq \frac{1}{\gamma^{L}}$ is made by induction; 
first for the special case of $L=2$ and later for an arbitrary value of $L$. 
The constant term $2\pi e$ can be embedded into $\gamma$, so we have
\begin{align}
P_{\textrm{out}}(\gamma) \doteq P\left(\gamma^{L} \prod_{l = 1}^{L} \alpha_l^2 < 1 \right). \label{eq:error_rate_prop}
\end{align}
The equality in diversity is reached after proving a lower bound and an upper bound for $P_{\textrm{out}}(\gamma)$.
In other words,  $P_{\textrm{out}}(\gamma) \dot{\ge} \frac{1}{\gamma^{L}}$ and $P_{\textrm{out}}(\gamma) \dot{\le} \frac{1}{\gamma^{L}}$
is equivalent to $P_{\textrm{out}}(\gamma) \doteq \frac{1}{\gamma^{L}}$. The reader should be aware
that a lower bound on the error probability yields an upper bound on diversity, and vice versa.

\begin{lemma} \label{lm_pol_L_2}
Consider a BF channel with diversity $L=2$. The Poltyrev outage limit defined by (\ref{equ_poltyrev_outage}) satisfies
$P_{\textrm{out}}(\gamma) \doteq \frac{1}{\gamma^{2}}$.
\end{lemma}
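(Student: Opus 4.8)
The plan is to reduce the product of squared fadings to a single one-dimensional integral and then sandwich that integral between two estimates, each of which exhibits diversity exactly $2$ up to a harmless logarithmic factor. First I would set $u_l = \alpha_l^2$; since $p(\alpha_l^2)=\exp(-\alpha_l^2)$, the variables $u_1,u_2$ are i.i.d.\ unit-rate exponentials. Absorbing the constant $2\pi e$ into $\gamma$ as in \refeq{eq:error_rate_prop} and writing $t:=1/\gamma^2\to 0$, the quantity to study is $P_{\textrm{out}}(\gamma)\doteq P(u_1u_2<t)$. Integrating out $u_1$ at fixed $u_2=v$ gives the representation
\begin{equation*}
P(u_1u_2<t)=\int_0^\infty e^{-v}\bigl(1-e^{-t/v}\bigr)\,dv,
\end{equation*}
which is the object I would bound from both sides.

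For the lower bound on the probability (which caps the diversity at $2$), I would keep only the sub-event $\{u_1<\sqrt{t}\}\cap\{u_2<\sqrt{t}\}$, on which $u_1u_2<t$ holds automatically. Independence then yields $P_{\textrm{out}}(\gamma)\ge(1-e^{-\sqrt{t}})^2\sim t=\gamma^{-2}$, hence $-\log P_{\textrm{out}}/\log\gamma\le 2+o(1)$, i.e.\ $P_{\textrm{out}}(\gamma)\,\dot{\ge}\,\gamma^{-2}$.

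For the matching upper bound I would split the integral at $v=t$. On $(0,t)$ I bound $1-e^{-t/v}\le 1$, contributing at most $\int_0^t e^{-v}\,dv\le t$; on $(t,\infty)$ I use $1-e^{-t/v}\le t/v$, contributing $t\int_t^\infty v^{-1}e^{-v}\,dv=t\,E_1(t)$ with the exponential integral obeying $E_1(t)\sim\log(1/t)$ as $t\to 0$. Therefore $P_{\textrm{out}}(\gamma)\lesssim t\log(1/t)=2\gamma^{-2}\log\gamma$, so $-\log P_{\textrm{out}}/\log\gamma\ge 2-(\log\log\gamma)/\log\gamma+o(1)\to 2$, i.e.\ $P_{\textrm{out}}(\gamma)\,\dot{\le}\,\gamma^{-2}$. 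The two bounds together give $P_{\textrm{out}}(\gamma)\doteq\gamma^{-2}$.

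The main obstacle is exactly this logarithmic factor: the upper bound is not of the clean polynomial form $c\,\gamma^{-2}$ but carries an extra $\log\gamma$, originating in the singularity of $v^{-1}e^{-v}$ at the origin (equivalently, in the logarithmic blow-up of the density of $u_1u_2$ near $0$). I would emphasize that such a sub-polynomial factor does not alter the diversity order, since $\log\log\gamma/\log\gamma\to 0$; working with the exponent $-\log(\cdot)/\log\gamma$ rather than with the bare probabilities is precisely what absorbs these corrections. This $L=2$ estimate is then the base case for the induction on $L$ flagged just before the lemma.
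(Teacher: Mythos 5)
Your proof is correct, but it takes a visibly different route from the paper's. You collapse the two-dimensional event into the single marginalized integral $\int_0^\infty e^{-v}\bigl(1-e^{-t/v}\bigr)\,dv$ with $t=\gamma^{-2}$ and split it at $v=t$, whereas the paper works geometrically in the fading plane: it partitions the area under the hyperbola $\alpha_1^2\alpha_2^2=\gamma^{-2}$ into a corner square $\mathcal{R}$ (both $\alpha_i^2<1/\gamma$) and two symmetric tails $\mathcal{T}_1=\mathcal{T}_2$ (split at $1/\gamma=\sqrt{t}$ per coordinate, not at $t$), computes $P(\mathcal{R})\doteq\gamma^{-2}$ as a product, and bounds $P(\mathcal{T}_2)$ with the same ingredients you use, namely $1-e^{-x}\le\min(1,x)$ and the logarithmic blow-up of $\int_{\cdot}^\infty v^{-1}e^{-v}\,dv$. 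Your lower bound (the corner event $\{u_1<\sqrt{t}\}\cap\{u_2<\sqrt{t}\}$) is essentially the paper's $P(\mathcal{R})$ computation reused as a lower bound, while the paper instead conditions on $\{\alpha_2^2<1\}$; both are valid. What your version buys is compactness: one integral, no region bookkeeping, no symmetry argument, and the $\log\gamma$ correction is identified and dispatched cleanly through the exponent notation, exactly as the paper implicitly does via $\dot{\le}$. What the paper's version buys is that its partition at threshold $1/\gamma$ is reused verbatim in the induction step of Theorem~\ref{thm_pol_L}, with $X=\prod_{i<L}\alpha_i^2$ and $Y=\alpha_L^2$ playing the roles of the two coordinates; your closed-form integral representation exploits that both variables are exactly exponential, so at the inductive step it would have to be replaced by the inductive tail bound on $P(X<\cdot)$ rather than an explicit density — not a flaw for the base-case lemma you were asked to prove, but worth noting if you intend your argument to scale to general $L$.
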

\begin{proof} 
We now prove $P_{\textrm{out}}(\gamma) \doteq P \left(\alpha_1^2 \alpha_2^2\; \gamma^{2} < 1 \right) \doteq \frac{1}{\gamma^{2}}$.
Recall that a Rayleigh distributed random variable $\alpha_i$ satisfies, for any $k>0$,
\begin{align}
P\left( \alpha_i^2 < \frac{1}{\gamma^k}\right) \doteq \frac{1}{\gamma^k} \; .\label{eq:ray_alpha_2}
\end{align}
% 
% \subsubsection{Lower Bound}
{\em 1) Lower Bound:}
\begin{align*}
 P \left(\alpha_1^2 \alpha_2^2\; \gamma^{2} < 1 \right) &= P \left(\alpha_1^2 \alpha_2^2\; \gamma^{2} < 1 \; | \; \alpha_2^2 < 1\right) P \left( \alpha_2^2 < 1 \right) \\
 &+ P \left(\alpha_1^2 \alpha_2^2\; \gamma^{2} < 1 \; | \; \alpha_2^2 > 1\right) P \left( \alpha_2^2 > 1 \right) \; .%\\
\end{align*}
 We lower bound by the first term to get
\begin{align*}
 &P\left(\alpha_1^2 \alpha_2^2\; \gamma^{2} < 1 \right) \ge P \left(\alpha_1^2 \alpha_2^2\; \gamma^{2} < 1 \; | \; \alpha_2^2 < 1\right) P \left( \alpha_2^2 < 1 \right) \\
&\ge P \left(\alpha_1^2 \; \gamma^{2} < 1 \right) P \left( \alpha_2^2 < 1 \right) 
% \\
% &
\doteq P \left(\alpha_1^2 \; \gamma^{2} < 1 \right),
\end{align*}
which gives
\begin{align}
% \text{which gives }& \nonumber \\
&P \left(\alpha_1^2 \alpha_2^2\; \gamma^{2} < 1 \right) \dot{\ge} \frac{1}{\gamma^2} \; .\label{eq:POL_L_2_LB}
\end{align}
% 
%\subsubsection{Upper Bound}
% 
\noindent
2) Upper Bound: To derive an upper bound on $P\left(\alpha_1^2 \alpha_2^2\; \gamma^{2} < 1 \right)$, 
we plot the boundary $\alpha_1^2 \alpha_2^2 = \frac{1}{\gamma^{2}}$ which is a hyperbola as shown in blue in \reffig{pic:p2}
(to be taken for $L=2$).
We partition the total area under the outage boundary into three regions: 
a) $\mathcal{R}$ is the area where $\alpha_1^2 < \frac{1}{\gamma}$ and $\alpha_2^2 < \frac{1}{\gamma}$; 
b) $\mathcal{T}_1$ is the area where $\alpha_1^2 < \frac{1}{\gamma}$ and $\alpha_2^2 > \frac{1}{\gamma}$ 
but $\alpha_1^2 \alpha_2^2 < \frac{1}{\gamma^{2}}$; and 
c) $\mathcal{T}_2$ is the area where $\alpha_1^2 > \frac{1}{\gamma}$ and $\alpha_2^2 < \frac{1}{\gamma}$ 
but $\alpha_1^2 \alpha_2^2 < \frac{1}{\gamma^{2}}$. 
The areas $\mathcal{T}_1$ and $\mathcal{T}_2$ are equal (for $L=2$). 
So we can write
\begin{align}
 P \left(\alpha_1^2 \alpha_2^2\; \gamma^{2} < 1 \right) 
 &= P(\mathcal{R}) + 2 P(\mathcal{T}_2) \; .\label{eq:ub}
\end{align}
% 
%---------------------------------------------------------------------------------------------------------------- 
%%\begin{figure}[t]
%%\begin{center}
%%\includegraphics[width=0.35\textwidth]{fading_plane_fig1.eps}
%%\end{center}
%%\caption{Boundary in the fading plane for $\alpha_1^2\alpha_2^2 = \frac{1}{\gamma^2}$.\label{pic:p1}}
%%\end{figure}
%----------------------------------------------------------------------------------------------------------------
% 
% 
% 
Then we have
\begin{align}
 &P(\mathcal{R}) = P \left(\alpha_1^2 \in \left[0,\frac{1}{\gamma} \right] \; \textrm{and} \; \alpha_2^2 \in \left[0,\frac{1}{\gamma} \right]\right) \nonumber \\
%  
%  &= \left( P \left(\alpha_1^2 \in \left[0,\frac{1}{\gamma} \right] \right) \right)^2 \nonumber \\
%  
 &= \left( P \left(\alpha_1^2 \in \left[0,\frac{1}{\gamma} \right] \right) \right)^2 \; %\nonumber \\
%  
%   &
  = \left( P \left(\alpha_1^2 \le \frac{1}{\gamma} \right) \right)^2 %\nonumber %\\%   
 %\Rightarrow \; &  P(\mathcal{R}) 
 \doteq \frac{1}{\gamma^2} . \label{eq:p_r}
\end{align}
%  
% \newpage 
% 
% 
% 
% 
\noindent
We now introduce $\phi(x) = 1 - e^{-x}$ which is required in the sequel. 
For $x \ge 0$, it can be shown that
\begin{align}
\phi(x) = 1 - e^{-x} \le \min(1, x)\; . \label{eq:f_e_x}
\end{align}

\noindent
Now denote $X = \alpha_1^2$ and $Y = \alpha_2^2$. Then
\begin{equation}
\label{eq:p_t_1}
P(\mathcal{T}_2) = \int_{\mathcal{T}_2} p_{X,Y}(x,y) = \int_{x=\frac{1}{\gamma}}^{\infty} \int_{y=0}^{\frac{1}{x\gamma^2}} e^{-x} e^{-y} dx \; dy,
\end{equation}
after integrating over $y$, we get
\begin{align}
 P(\mathcal{T}_2) &= \int_{x=\frac{1}{\gamma}}^{\infty} \left(1 - e^{-\frac{1}{x\gamma^2}} \right) e^{-x} \; dx.\label{eq:T}
\end{align}

\noindent
The previous integral has $\frac{1}{x\gamma^2} \le \frac{1}{\gamma} \le 1$ at high SNR. We get
\begin{align*}
\phi\left(\frac{1}{x\gamma^2} \right) \le \min\left( 1, \frac{1}{x\gamma^2}\right) = \frac{1}{x\gamma^2} \;.
\end{align*}
With this result, \refeq{eq:T} can be upperbounded as
\begin{align}
P(\mathcal{T}_2) &\le \int_{x=\frac{1}{\gamma}}^{\infty} \frac{1}{x\gamma^2} e^{-x} \; dx = \frac{\Delta}{\gamma^2} \; ,\label{eq:T_2} 
\end{align}
where $\Delta$ is defined as
\begin{align}
 \Delta = \int_{x=\frac{1}{\gamma}}^{\infty} \frac{e^{-x} }{x} \; dx. \label{eq:delta}
\end{align}
Solving \refeq{eq:delta} using integration by parts yields
\begin{align}
 \Delta = e^{-\frac{1}{\gamma}} \ln(\gamma) + \int_{\frac{1}{\gamma}}^{\infty} \ln(x) e^{-x} dx .\label{eq:delta_2}
\end{align}
With $e^{-\frac{1}{\gamma}} \le 1$ and $\ln(x) \le x$, \refeq{eq:delta_2} can be upperbounded as
\begin{align}
\Delta \le \ln(\gamma) + \frac{1}{\gamma}e^{-\frac{1}{\gamma}} + e^{-\frac{1}{\gamma}}  \label{eq:delta_3}
\end{align}
\noindent
Substituting the upper bound on $\Delta$ in \refeq{eq:T_2}, 
\begin{align}
 &P(\mathcal{T}_2) \le \frac{ \ln(\gamma) + \frac{1}{\gamma}e^{-\frac{1}{\gamma}} + e^{-\frac{1}{\gamma}} } { \gamma^2 } \; %\nonumber \\
 \Rightarrow \; 
%  &
 P(\mathcal{T}_2) \; \dot{\le} \; \frac{ 1 } { \gamma^2 } \label{eq:p_t_2}
\end{align}
\noindent
Using \refeq{eq:p_r} and \refeq{eq:p_t_2} in \refeq{eq:ub}, we have 
\begin{align}
P \left(\alpha_1^2 \alpha_2^2\; \gamma^{2} < 1 \right) \; \dot{\le} \;  \frac{ 1 } { \gamma^2 } \label{eq:POL_L_2_UB}
\end{align}
which is the desired upper bound.\\
\noindent
It can be concluded from the upper bound and lower bound as given in \refeq{eq:POL_L_2_LB} and \refeq{eq:POL_L_2_UB} that
\[
P \left(\alpha_1^2 \alpha_2^2\; \gamma^{2} < 1 \right) \doteq \frac{1}{\gamma^{2}} \;.
\qedhere%\qquad \square
\]
\end{proof}
% 
% We can state following from the lower bound and upper bound derived above: If we have a Poltyrev limit achieving lattice $\Lambda$ that has been appropriately constructed for Rayleigh block-fading channel with $L=2$ then such a lattice achieves full diversity under appropriate decoding algorithm.
% 
% 
% 
%
Now the previous lemma is generalized by induction to an arbitrary value of diversity order $L \ge 2$.
%%-----------------------------------------------------------------
%%-----------------------------------------------------------------
\begin{theorem} \label{thm_pol_L}
Consider a BF channel with diversity $L \ge 2$. 
The Poltyrev outage limit defined by (\ref{equ_poltyrev_outage}) satisfies $P_{\textrm{out}}(\gamma) \doteq \frac{1}{\gamma^{L}}$.
\end{theorem}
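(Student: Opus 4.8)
The plan is to prove $P_{\textrm{out}}(\gamma)\doteq\frac{1}{\gamma^{L}}$ by induction on $L$, taking \reflm{lm_pol_L_2} as the base case $L=2$. Writing $X_l=\alpha_l^2$, each with density $e^{-x}$ on $[0,\infty)$, I reduce via \refeq{eq:error_rate_prop} to showing $P\big(\prod_{l=1}^{L}X_l<\gamma^{-L}\big)\doteq\gamma^{-L}$, and establish the matching lower and upper bounds separately. The lower bound is direct and needs no induction: on the event $\{X_1<\gamma^{-L}\}\cap\bigcap_{l=2}^{L}\{X_l<1\}$ one has $\prod_{l=1}^{L}X_l<\gamma^{-L}$, so by independence $P\big(\prod X_l<\gamma^{-L}\big)\ge P(X_1<\gamma^{-L})\prod_{l=2}^{L}P(X_l<1)$. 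Since the last $L-1$ factors are positive constants while $P(X_1<\gamma^{-L})\doteq\gamma^{-L}$ by \refeq{eq:ray_alpha_2} with $k=L$, this yields $P_{\textrm{out}}(\gamma)\;\dot{\ge}\;\gamma^{-L}$.

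For the upper bound I would peel off the last fading and condition on the product $V=\prod_{l=1}^{L-1}X_l$ of the remaining ones. Because $X_L$ is exponential, $P\big(\prod_{l=1}^{L}X_l<\gamma^{-L}\big)=\int_{0}^{\infty}\phi\big(\tfrac{1}{\gamma^{L}v}\big)f_V(v)\,dv$ with $\phi(x)=1-e^{-x}$ as in \refeq{eq:f_e_x}. Splitting the integral at $v=\gamma^{-L}$ and bounding $\phi$ by $\min(1,\cdot)$, exactly as done for $\mathcal{T}_2$ in the Lemma, gives $P\big(\prod_{l=1}^{L}X_l<\gamma^{-L}\big)\le P(V<\gamma^{-L})+\gamma^{-L}\int_{\gamma^{-L}}^{\infty}\frac{f_V(v)}{v}\,dv$. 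The first term is an $(L-1)$-fold product outage evaluated at threshold $\gamma^{-L}$; reparametrizing $\gamma\mapsto\gamma^{L/(L-1)}$ so that the threshold reads $\delta^{-(L-1)}$, the induction hypothesis (the theorem at level $L-1$) gives $P(V<\gamma^{-L})\;\dot{\le}\;\gamma^{-L}$, the desired diversity $L$.

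The crux is the second term: I must show that $\Delta_{L-1}:=\int_{\gamma^{-L}}^{\infty}v^{-1}f_V(v)\,dv$ grows only polylogarithmically in $\gamma$, so that $\gamma^{-L}\Delta_{L-1}\;\dot{\le}\;\gamma^{-L}$ and the diversity is not degraded; this is the higher-$L$ analogue of the quantity $\Delta$ controlled in \refeq{eq:delta}, and the finite cutoff at $v=\gamma^{-L}$ is essential since $v^{-1}f_V(v)$ is not integrable at the origin. I expect this to be the main obstacle. One route is to quantify that the density of a product of $L-1$ independent exponentials behaves near the origin like $(\ln(1/v))^{L-2}/(L-2)!$, whence the substitution $u=\ln(1/v)$ turns $\int_{\gamma^{-L}}^{1}v^{-1}f_V(v)\,dv$ into a polynomial of degree $L-1$ in $\ln\gamma$, while the tail $\int_{1}^{\infty}v^{-1}f_V(v)\,dv$ is a constant. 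Equivalently, $\Delta_{L-1}$ obeys the recursion $\Delta_{L-1}(\epsilon)=\int_{0}^{\infty}x^{-1}e^{-x}\,\Delta_{L-2}(\epsilon/x)\,dx$, which raises the polylog degree by one at each step from the base $\Delta_1(\epsilon)=\int_{\epsilon}^{\infty}x^{-1}e^{-x}\,dx\doteq\ln(1/\epsilon)$. Either route shows $\frac{-\log\Delta_{L-1}}{\log\gamma}\to 0$, so combining the two terms gives $P_{\textrm{out}}(\gamma)\;\dot{\le}\;\gamma^{-L}$; together with the lower bound this proves $P_{\textrm{out}}(\gamma)\doteq\frac{1}{\gamma^{L}}$.
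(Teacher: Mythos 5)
Your proof is correct, but its upper bound takes a genuinely different route from the paper's. The paper partitions the area under the outage boundary into three regions $\mathcal{R}$, $\mathcal{T}_1$, $\mathcal{T}_2$ by thresholding $X=\prod_{i=1}^{L-1}\alpha_i^2$ at $\gamma^{-(L-1)}$ and $Y=\alpha_L^2$ at $\gamma^{-1}$, applies the induction hypothesis pointwise inside the resulting integrals, and tames the awkward region $\mathcal{T}_2$ with an auxiliary cutoff $y_0=\gamma^{-(L-\epsilon)}$, concluding $P(\mathcal{T}_2)\ \dot{\le}\ \gamma^{-(L-\epsilon)}$ for every $\epsilon>0$. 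You instead split only on $V=\prod_{l=1}^{L-1}\alpha_l^2$ at the single threshold $\gamma^{-L}$, which mirrors the structure of \reflm{lm_pol_L_2} exactly: the term $P(V<\gamma^{-L})$ is dispatched by the induction hypothesis (your reparametrization $\gamma\mapsto\gamma^{L/(L-1)}$ is a clean justification for invoking it at threshold $\gamma^{-L}$ rather than $\gamma^{-(L-1)}$; the paper implicitly does the same thing, pointwise in $y$, inside its $\mathcal{T}_1$ integral), and the leftover term is $\gamma^{-L}\Delta_{L-1}$. The price of your route is the one ingredient the paper never needs: the polylogarithmic growth of $\Delta_{L-1}$, i.e., control of the near-origin density of a product of $L-1$ exponentials, whereas the paper's exponential-integral estimates only ever involve the single-exponential case. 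Both of your proposed routes to that bound are sound --- the asymptotics $f_V(v)\sim(\ln(1/v))^{L-2}/(L-2)!$ is a standard fact, and the recursion $\Delta_k(\epsilon)=\int_0^\infty x^{-1}e^{-x}\Delta_{k-1}(\epsilon/x)\,dx$ with base $\Delta_1(\epsilon)\doteq\ln(1/\epsilon)$ closes under the inductive bound $\Delta_k(\epsilon)\le C_k\bigl(1+(\ln(1/\epsilon))^k\bigr)$ --- so the argument is completable; you rightly flag this as the main remaining work. What your route buys: an explicit bound $P_{\textrm{out}}(\gamma)\le C\,\gamma^{-L}(\ln\gamma)^{L-1}$ with no $\epsilon$-slack, a lower bound that needs no induction at all (the paper's does), and no pointwise application of an asymptotic hypothesis under an integral sign. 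What the paper's route buys: it stays entirely within the elementary toolkit already used for $L=2$ and never requires any property of the distribution of a product of exponentials.
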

\begin{proof}
Let us assume that the theorem statement is true for $L-1$, i.e.
\begin{align}
P \left(\prod_{i=1}^{L-1} \alpha_i^2 \; \gamma^{(L-1)} < 1 \right) \doteq \frac{1}{\gamma^{(L-1)}} \; .\label{eq:P_Y_L-1}
\end{align} 
Now, let us prove it for a diversity order $L$.
As for \reflm{lm_pol_L_2}, we derive upper and lower bounds for $P_{\textrm{out}}(\gamma)$.\\~\\
% 
% \subsubsection{Lower Bound}
{\em 1) Lower Bound:} In a way similar to the proof of the lower bound in \reflm{lm_pol_L_2},
the term with $\alpha_L^2 \gamma > 1$ is dropped, so we have 
\begin{align*}
 P_{\textrm{out}}(\gamma) & \doteq P\left(\prod_{i=1}^{L-1} \alpha_i^2 \; \gamma^{(L-1)} \; \alpha_L^2 \gamma < 1 \right) \\
 &\ge P\left(\prod_{i=1}^{L-1} \alpha_i^2 \; \gamma^{(L-1)} < 1 \right) P\left( \alpha_L^2 \gamma < 1 \right) \\
 &\dot{\ge} ~\frac{1}{\gamma^{(L-1)} \gamma} \;\; = \;\; \frac{1}{\gamma^{L}} \;,%\\
\end{align*}
which gives the required lower bound with diversity $L$. The upper bound
relies on the partitioning of the area under the outage boundary.

\noindent
{\em 2) Upper Bound:} We partition the area under the outage boundary into three regions: 
a) $\mathcal{R}$ is the area where $\prod_{i=1}^{L-1} \alpha_i^2 < \frac{1}{\gamma^{L-1}}$ and $\alpha_L^2 < \frac{1}{\gamma}$; 
b) $\mathcal{T}_1$ is the area where $\prod_{i=1}^{L-1} \alpha_i^2 < \frac{1}{\gamma^{L-1}}$  and $\alpha_L^2 > \frac{1}{\gamma}$ 
but $\prod_{i=1}^{L} \alpha_i^2 < \frac{1}{\gamma^{L}}$; and 
c) $\mathcal{T}_2$ is the area where $\prod_{i=1}^{L-1} \alpha_i^2 > \frac{1}{\gamma^{L-1}}$ and $\alpha_L^2 < \frac{1}{\gamma}$ 
but $\prod_{i=1}^{L} \alpha_i^2 < \frac{1}{\gamma^{L}}$. 
The areas $\mathcal{T}_1$ and $\mathcal{T}_2$ are not equal for $L>2$. 
With this, we get
\begin{align}
P_{\textrm{out}}(\gamma) \doteq P(\mathcal{R}) + P(\mathcal{T}_1) + P(\mathcal{T}_2).\label{eq:P_Y}
\end{align}
\begin{align}
P(\mathcal{R}) &= P \left(\alpha_{L}^2 \in \left[0,\frac{1}{\gamma} \right] \; \textrm{and} \; \alpha_1^2\cdots\alpha_{L-1}^2 \in \left[0,\frac{1}{\gamma^{(L-1)}} \right]\right) \nonumber \\
 &= P \left(\alpha_L^2 \le \frac{1}{\gamma} \right)  P \left(\alpha_1^2\cdots\alpha_{L-1}^2 \le \frac{1}{\gamma^{(L-1)}} \right) \nonumber \\
&\doteq \frac{1}{\gamma^L}. \label{eq:p_r_2}
\end{align}
The last equality is derived from \refeq{eq:ray_alpha_2} and \refeq{eq:P_Y_L-1}.
For the calculation of $P(\mathcal{T}_1)$ and $P(\mathcal{T}_2)$, 
let $X = \prod_{i=1}^{(L-1)} \alpha_i^2$ and let $Y = \alpha_L^2$.\\
%
%
%---------------------------------------------------------------------------------------------------------------- 
\begin{figure}[!t]
\begin{center}
\includegraphics[width=0.35\textwidth]{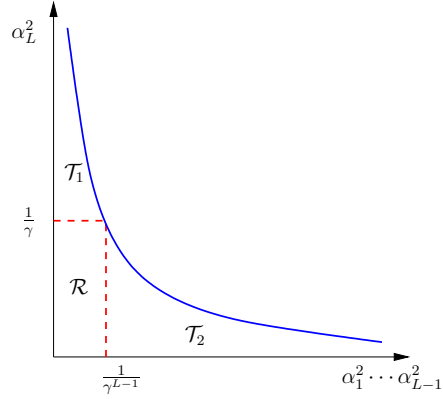}
\end{center}
\caption{Boundary in the fading plane defined by a constant product of squared fadings at a given SNR,
$\alpha_1^2\alpha_2^2 \cdots \alpha_L^2 = \frac{1}{\gamma^L}$. The area under the Poltyrev outage boundary 
is partitioned into three regions $\mathcal{R}$, $\mathcal{T}_1$, and $\mathcal{T}_2$.\label{pic:p2}}
\end{figure}
%---------------------------------------------------------------------------------------------------------------- 
%
%
%
%%---------------- Start of T1 for the Upper Bound in Theorem 1 ------------------
%
%
Calculation of $P(\mathcal{T}_1)$:\\
\begin{equation}
P(\mathcal{T}_1) = \int_{y=\frac{1}{\gamma}}^{\infty} e^{-y} \; dy \int_{x=0}^{\frac{1}{y\gamma^{L}}} p_X(x) \; dx	\label{eq:T_2_L_1}
\end{equation}
From \refeq{eq:P_Y_L-1} we get,
\begin{align*}
\int_{x=0}^{\frac{1}{y\gamma^{L}}} p_X(x) \; dx = P \left(\prod_{i=1}^{L-1} \alpha_i^2 \; < \frac{1}{y\gamma^{L} } \right) \doteq \frac{1}{y\gamma^L}.
\end{align*}
\noindent
\refeq{eq:T_2_L_1} can be rewritten as
\begin{align}
P(\mathcal{T}_1) \doteq \frac{1}{\gamma^{L}} \int_{y=\frac{1}{\gamma}}^{\infty} \frac{e^{-y}}{y} \; dy.   \label{eq:T_2_L_2}
\end{align}
\noindent
Since $\frac{1}{\gamma} \to 0$, the exponential integral is given by \cite{Bender1978},
\begin{align*}
\int_{y=\frac{1}{\gamma}}^{\infty} \frac{e^{-y}}{y} \; dy = 0.5772 + \ln(\gamma) + o\left(\frac{1}{\gamma}\right).
\end{align*}
Then, we have for the area $\mathcal{T}_1$
\begin{equation}
\label{eq:P_T1_L}
P(\mathcal{T}_1) \doteq \frac{\ln(\gamma) }{\gamma^{L}} \doteq \frac{1}{\gamma^{L}}. 
\end{equation}
%
%
%%---------------- End of T1 for the Upper Bound in Theorem 1 ------------------
%
%
%%---------------- Start of T2 for the Upper Bound in Theorem 1 ------------------
%
%
Calculation of $P(\mathcal{T}_2)$:\\
\begin{align*}
P(\mathcal{T}_2) = \int_{y=0}^{\frac{1}{\gamma}} e^{-y} \; dy \int_{x=\frac{1}{\gamma^{(L-1)}}}^{\frac{1}{y\gamma^{L}}} p_X(x) \; dx.
\end{align*}
\noindent
$\forall \epsilon > 0 $, let $L_0 = L - \epsilon$ and $y_0 = \frac{1}{\gamma^{L_0}}$ then $P(\mathcal{T}_2)$ is given by
\begin{align}
 P(\mathcal{T}_2) &= \int_{y=0}^{y_0} e^{-y} \; dy \int_{x=\frac{1}{\gamma^{(L-1)}}}^{\frac{1}{y\gamma^{L}}} p_X(x) \; dx \nonumber \\
  & + \int_{y_0}^{\frac{1}{\gamma}} e^{-y} \; dy \int_{x=\frac{1}{\gamma^{(L-1)}}}^{\frac{1}{y\gamma^{L}}} p_X(x) \; dx  \label{eq:T_1_L_1}
\end{align}
The upper limit of the inner integral goes to 0 because
\begin{align*}
\frac{1}{y\gamma^L} \le \frac{1}{y_0\gamma^L} = \frac{1}{\gamma^{(L-L0)}} = \frac{1}{\gamma^{\epsilon}}.
\end{align*}
\noindent
In the first term of (\ref{eq:T_1_L_1}), the inner integral is upperbounded by~1.
In the second term of (\ref{eq:T_1_L_1}), the inner integral is found by applying (\ref{eq:P_Y_L-1}) twice.
We reach an upper bound for $P(\mathcal{T}_2)$,
\begin{align}
P(\mathcal{T}_2) \le T_2 \label{eq:T_1_1}
\end{align}
\noindent
where
\begin{align}
 &T_2 \doteq \int_{y=0}^{y_0} e^{-y} dy 
+ \int_{y_0}^{\frac{1}{\gamma}} e^{-y} \left[ \frac{1}{y\gamma^{L}} - \frac{1}{\gamma^{L-1}}\right] dx \nonumber \\
 &\le y_0 + \frac{1}{\gamma^L} \int_{y_0}^{\frac{1}{\gamma}} \frac{e^{-y}}{y} \; dy %\nonumber \\
%  
%  \Rightarrow \; T_1 \; 
%  &
 \;\; \dot{\le} \;\; \frac{1}{\gamma^{L - \epsilon}} + \frac{1}{\gamma^L} \int_{y_0}^{\frac{1}{\gamma}} \frac{e^{-y}}{y} \; dy\;. \label{eq:T_1_2}
\end{align}
\noindent
The evaluation of the exponential integral in \refeq{eq:T_1_2} gives \cite{Bender1978}
\begin{equation}
\int_{y_0}^{\frac{1}{\gamma}} \frac{e^{-y}}{y} \; dy 
= (L_0 -1) \ln(\gamma) + o\left( \frac{1}{\gamma} \right)\;. \label{eq:T_1_3}
\end{equation}
\noindent
Then, from \refeq{eq:T_1_1}, \refeq{eq:T_1_2}, and \refeq{eq:T_1_3}, $\forall \epsilon > 0$
\begin{equation} 
\label{eq:P_T2_L}
P(\mathcal{T}_2) \; \dot{\le} \; \frac{1}{\gamma^{L - \epsilon}} 
+ \frac{(L - \epsilon - 1)\ln(\gamma)}{\gamma^{L}} \doteq \frac{1}{\gamma^{L - \epsilon}} 
\end{equation}
%
%
%%---------------- End of T2 for the Upper Bound in Theorem 1 ------------------
% 
% 
Using \refeq{eq:p_r_2}, \refeq{eq:P_T1_L}, and \refeq{eq:P_T2_L}, 
we get the upper bound 
\begin{align} 
P_{\textrm{out}}(\gamma) \; \dot{\le} \;  \frac{1}{\gamma^L} \; .
\end{align}
\noindent
Similar to the case of $L=2$, we conclude from the lower bound and the upper bound derived above that 
\[
P_{\textrm{out}}(\gamma)  \; \doteq \;  \frac{1}{\gamma^L} \; . \qedhere
\]
% 
% as follows: If we have capacity achieving lattice $\Lambda$ that has been appropriately constructed for Rayleigh block-fading channel with $L$ independent channel coefficients then such a lattice with appropriate decoding algorithm achieves full diversity.
\end{proof}
%
% 
% 
%%-----------------------------------------------------------------
%%-----------------------------------------------------------------
\section{Full-Diversity Construction of LDLC under ML decoding}
\noindent
Let $\Lambda$ be a real lattice of rank $n$ defined by a $n \times n$ 
integer-check matrix $H$. Assume that $n$ is multiple of $L$,
where $L$ is the diversity order of the block-fading channel.
Let us write $H$ in the form
\begin{equation}
H=\left[ ~ \tilde{H}_1 ~|~ \tilde{H}_2 ~|~ \ldots ~|~ \tilde{H}_L ~\right],
\end{equation}
where $\tilde{H}_j$ is a $n \times n/L$ matrix, $j=1, \ldots, L$.
In the above expression of the integer-check matrix~$H$, the channel is assumed to have 
the same fading value $\alpha_j$ affecting all $n/L$ 
lattice components associated to the columns of $\tilde{H}_j$,
as defined in (\ref{equ_rayleigh_any_L}) in \refsec{sec_channel_model}. 
Using the $L$ submatrices $\tilde{H}_j$, let us build
a new shortened integer-check matrix $\Psi_k$ of size $n \times \ell n/L$
by combining $\ell$ submatrices out of $L$,
for $\ell=1, \ldots, L-1$. %
% The number of shortened integer-check matrices is
% \begin{equation}
% K=\sum_{\ell=1}^{L-1} {L \choose \ell} = 2^L-2.
% \end{equation}
The number of shortened integer-check matrices is $K = \sum_{\ell=1}^{L-1} = 2^L-2$ since the empty matrix $\Psi_0$ and the full matrix $\Psi_{2^L-1} = H$ correspond to two trivial cases.
Also, for $k=1, \ldots, K$, we define the function $\kappa(k)$ as %\[\kappa(k)=i\times \frac{n}{L}\]
\begin{equation}
\kappa(k)=i\times \frac{n}{L}%~~\text{if}~~
% \sum_{\ell=1}^{i-1} {L \choose \ell} \le k \le \sum_{\ell=1}^{i} {L \choose \ell},
\end{equation}
for $i$ that satisfies
\begin{equation}
% \kappa(k)&=i\times \frac{n}{L} \nonumber \\
% \text{for $i$ that satisfies} & \nonumber\\
\sum_{\ell=1}^{i-1} {L \choose \ell} \le k \le \sum_{\ell=1}^{i} {L \choose \ell},
\end{equation}
such that $\Psi_k$ is a $n \times \kappa(k)$ matrix.

For example, for $L=2$, we have $\Psi_1=\tilde{H}_1$, $\Psi_2=\tilde{H}_2$,
and $\kappa(1)=\kappa(2)=n/L$. 
For $L=3$, we have $\Psi_1=\tilde{H}_1$, $\Psi_2=\tilde{H}_2$, $\Psi_3=\tilde{H}_3$,
$\Psi_4=[\tilde{H}_1|\tilde{H}_2]$, $\Psi_5=[\tilde{H}_1|\tilde{H}_3]$,
$\Psi_6=[\tilde{H}_2|\tilde{H}_3]$, $\kappa(1)=\kappa(2)=\kappa(3)=n/L$,
and $\kappa(4)=\kappa(5)=\kappa(6)=2n/L$.

% \noindent
One of the purpose of this paper is to build low-density lattices 
such that $P_e \approx K_e/\gamma^L$ 
at large signal-to-noise ratio, where $K_e$ is a non-negative real constant. 
The so called coding gain~\cite{Tse2005} is given by $1/\sqrt[L]{K_e}$.
Maximizing the coding gain or equivalently minimizing $K_e$ is not the subject
of this paper.
Following the above definition and using the $K$ shortened integer-check matrices, 
we can now state a simple lemma useful for proving full diversity.

\begin{lemma} \label{lm_fulldiv_ml} 
A lattice $\Lambda$ is full-diversity under ML decoding on an L-diversity
block-fading channel if and only if $\Psi_k x \in \Z^n$ admits $x=0 \in \R^{\kappa(k)}$ as unique solution, $\forall k=1, \ldots, 2^L-2$.
%%\vspace{-5pt}
\end{lemma}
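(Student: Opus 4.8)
The plan is to pass from the probabilistic definition of full diversity to the algebraic block-support condition on $H$ through the pairwise error probability (PEP). First I would use the group structure of $\Lambda$: the ML error event between two lattice points is governed by their difference, which is again a nonzero lattice vector $\bs{w}\in\Lambda\setminus\{0\}$. Splitting $\bs{w}$ into the $L$ blocks $\bs{w}_1,\ldots,\bs{w}_L$ of length $n/L$ that match the fading pattern of (\ref{equ_rayleigh_any_L}), I define its block support $S(\bs{w})=\{\,j:\bs{w}_j\neq 0\,\}$ and block weight $d_B(\bs{w})=|S(\bs{w})|$. The conditional PEP is a $Q$-function of $\big(\sum_{j}\alpha_j^2\|\bs{w}_j\|^2\big)^{1/2}/(2\sigma)$, so only the blocks in $S(\bs{w})$ carry fadings that must be deeply faded to provoke an error; averaging over the i.i.d. Rayleigh coefficients and applying (\ref{eq:ray_alpha_2}) once per active block yields $\mathrm{PEP}(\bs{w})\doteq 1/\gamma^{d_B(\bs{w})}$.

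Next I would sandwich $P_e(\Lambda)$ to show that its diversity equals $d_{\min}=\min_{\bs{w}\neq 0}d_B(\bs{w})$. Retaining a single PEP term for a vector attaining $d_{\min}$ gives $P_e(\Lambda)\,\dot{\ge}\,1/\gamma^{d_{\min}}$, hence an upper bound on the diversity. For the matching lower bound on the diversity I would invoke the union bound $P_e(\Lambda)\le\sum_{\bs{w}\neq 0}\mathrm{PEP}(\bs{w})$, grouping vectors by their finitely many support patterns; the pattern with the fewest active blocks dominates at high SNR, so $P_e(\Lambda)\,\dot{\le}\,1/\gamma^{d_{\min}}$. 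Since $d_B(\bs{w})\le L$ always, this proves $P_e(\Lambda)\doteq 1/\gamma^{d_{\min}}$ and reduces full diversity to the condition $d_{\min}=L$, i.e. \emph{no nonzero lattice vector possesses a vanishing block}.

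The last step is the dictionary between such a degenerate lattice vector and a nonzero solution of $\Psi_k x\in\Z^n$. If $\bs{w}\in\Lambda\setminus\{0\}$ has support $S=S(\bs{w})$ equal to a nonempty proper subset of $\{1,\ldots,L\}$, then stacking $\{\bs{w}_j\}_{j\in S}$ into a vector $x\in\R^{\kappa(k)}$ and letting $\Psi_k$ gather the matching submatrices $\{\tilde{H}_j\}_{j\in S}$ gives $\Psi_k x=H\bs{w}\in\Z^n$ with $x\neq 0$; conversely, a nonzero solution of $\Psi_k x\in\Z^n$ reassembles into a nonzero lattice vector whose blocks outside $S$ vanish, so $d_B<L$. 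Because the matrices $\Psi_k$, $k=1,\ldots,2^L-2$, enumerate exactly the nonempty proper subsets of the $L$ blocks, $d_{\min}=L$ holds if and only if every system $\Psi_k x\in\Z^n$ admits only $x=0$, which is the asserted equivalence.

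I expect the main obstacle to lie in the union-bound step for an \emph{infinite} constellation: one must guarantee that summing PEPs over infinitely many lattice vectors neither diverges nor inflates the diversity order. This is controlled by noting that, for each fixed support pattern, the admissible vectors form a lower-dimensional sublattice whose theta-type series converges because the $Q$-function decays super-polynomially; the averaged sum then behaves like $1/\gamma^{|S|}$ up to at most logarithmic factors, which, exactly as the $\ln(\gamma)$ terms absorbed in \refeq{eq:P_T1_L} and \refeq{eq:P_T2_L}, do not affect the diversity order.
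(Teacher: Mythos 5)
Your proposal follows the same route as the paper's own proof: both reduce full diversity to pairwise error probabilities on the channel \refeq{equ_rayleigh_any_L}, both invoke the union bound $P_e(\Lambda) \le \sum_{\bs{w}\in\Lambda\setminus\{0\}} P(0\rightarrow \bs{w})$ with the fading-averaged bound \refeq{equ_pairwise}, and both translate a diversity deficit into the existence of a nonzero lattice vector with a vanishing block. Your block-support dictionary (stack the nonzero blocks of $\bs{w}$ into $x$, note $\Psi_k x = H\bs{w} \in \Z^n$, and conversely, since $\Lambda=\{\bs{w}: H\bs{w}\in\Z^n\}$) is exactly the paper's statements $\mathcal{S}_1,\mathcal{S}_2$ generalized to all $2^L-2$ nonempty proper subsets of blocks, and your single-term lower bound $P_e(\Lambda)\ge \mathrm{PEP}(\bs{w})$ makes the ``only if'' direction explicit where the paper leaves it implicit.

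The gap is in the step you yourself flag as the main obstacle, and your repair of it fails. You claim that for each support pattern the sum of fading-averaged PEPs ``behaves like $1/\gamma^{|S|}$ up to at most logarithmic factors.'' It does not: by Tonelli the order of averaging and summation is immaterial, and for $n\ge 4$ the sum of averaged PEPs over a rank-$n$ lattice diverges. Concretely, for $L=2$ the averaged PEP is of order $\prod_{j=1}^{2}\bigl(1+\gamma\|\bs{w}_j\|^2/8\bigr)^{-1}$; a dyadic shell $\|\bs{w}_1\|\in[A,2A]$, $\|\bs{w}_2\|\in[B,2B]$ contains on the order of $A^{n/2}B^{n/2}$ lattice points, each contributing on the order of $\gamma^{-2}A^{-2}B^{-2}$, so the shell contributes $A^{n/2-2}B^{n/2-2}/\gamma^{2}$, and the sum over shells diverges whenever $n \ge 4$ --- precisely the dimensions of interest here ($n=64, 100, 1000$). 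Equivalently, the expectation of your fixed-fading theta series is $+\infty$: its convergence for every fixed $\bs{\alpha}$ does not give a finite mean, because the series blows up on deep fades. A correct argument must keep $\min(1,\cdot)$ inside the expectation, i.e., bound $P_e$ by $1$ on a deep-fade region of probability $\doteq 1/\gamma^{L}$ (a Poltyrev-outage-type region as in \refsec{sec:POL}) and control the conditional theta series only on its complement. To be fair, the paper's own proof uses the same unconditional union bound and quietly suffers from the same issue; it escapes by recasting the question as recoverability from block erasures, which is a sketch rather than a complete estimate. But your version, as written, rests on a convergence claim that is provably false, so the ``if'' direction of the lemma is not established.
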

% \noindent

% The proof for \reflm{lm_fulldiv_ml} is based on a union bound on the pairwise error probability $P(0\rightarrow x)$ and omitted here. %due to lack of space.
% The proof for Lemma 2 is omitted here, it is based on a union bound on the pairwise error probability $P (0 \rightarrow x)$, see (7) in \cite{Boutros1996}.
\begin{proof} It is assumed that the 0 point has been transmitted
since point error probability does not depend on the transmitted point.
The key idea is to prove full diversity in presence of block erasures,
an approach recently used in \cite{Boutros2009}\cite{Boutros2009ita}.\\
\noindent
Let us start with the case $L=2$. For $x \in \Lambda$, 
let $P(0 \rightarrow x)$ denote
the pairwise error probability, i.e., the probability of $\|y-\alpha\cdot x\|^2 < \|y-\alpha\cdot 0\|^2$
when point 0 is transmitted. Using standard tools from Communications Theory \cite{Tse2005},
it is easy to prove a well-known upperbound for the pairwise error probability
on a block fading channel with double diversity as defined in \refeq{equ_rayleigh_any_L},
\begin{equation}
\label{equ_pairwise}
P(0 \rightarrow x) \le \frac{1}{1+\frac{\gamma}{8}(\sum_{i=1}^{n/2} x_i^2)} \times 
\frac{1}{1+\frac{\gamma}{8}(\sum_{i=n/2+1}^n x_i^2)}
\end{equation}
where $\gamma$ is defined by \refeq{equ_snr}. The above bound is known to be very loose,
but it is sufficient for our study. Consider the following statements:
\begin{itemize}
\item $\mathcal{S}_1$: ~$x_{n/2+1}=x_{n/2+2}=\ldots=x_n=0$~~~$\Rightarrow$~~~$x_1=x_2=\ldots=x_{n/2}=0$.
\item $\mathcal{S}_2$: ~$x_1=x_2=\ldots=x_{n/2}=0$~~~$\Rightarrow$~~~$x_{n/2+1}=x_{n/2+2}=\ldots=x_n=0$.
\item $\mathcal{S}_3= (\mathcal{S}_1~\text{and}~\mathcal{S}_2$).
\end{itemize}
From (\ref{equ_pairwise}) and the fact that $P_e(\Lambda) \le \sum_{x \in \Lambda-\{0\}} P(0 \rightarrow x)$, 
we see that $\mathcal{S}_3$ is true if and only if $P_e(\Lambda)$ decreases as $1/\gamma^2$, i.e.,
full diversity is attained. On the other hand, $\mathcal{S}_1$ is equivalent to $\Psi_1 x \in \Z^n$ admitting $0 \in \R^{n/2}$ as a unique
solution and $\mathcal{S}_2$ is equivalent to $\Psi_2 x \in \Z^n$ admitting $0 \in \R^{n/2}$ as a unique
solution,  this completes the proof for $L=2$.\\
\noindent
In order to easily extend to $L>2$, let us interpret $\mathcal{S}_1$ and $\mathcal{S}_2$.
Statement $\mathcal{S}_1$ corresponds to a context where $n/2$ components $(x_{n/2+1},x_{n/2+2}, \ldots, x_n)$
are perfectly known, i.e., $\gamma=+\infty$ and $\alpha_2=1$ (or any non-zero value).
Also, it is assumed that the $n/2$ components $(x_1, x_2, \ldots, x_{n/2})$ are unknown (erased)
and the constraint $\Psi_1 x \in \Z^n$ gives all possible solutions. Thus,
the above proof is based on a block erasure channel.
The lattice $\Lambda$ is full-diversity under ML decoding iff $n/2$ erased components
$(x_1, x_2, \ldots, x_{n/2})$ can be uniquely determined from perfectly known $n/2$ components 
$(x_{n/2+1},x_{n/2+2}, \ldots, x_n)$ and vice versa.\\
\noindent
Now, for any $L \ge 2$, we can state that $\Lambda$ is full-diversity on the fading channel
defined by (\ref{equ_rayleigh_any_L}) iff $\ell$ erased blocks of components can be uniquely
determined from the perfectly known (non-erased) lattice components, where $1 \le \ell \le L-1$.
This is equivalent to $0$ being the unique solution of the $2^L-2$ constraints $\Psi_k x \in \Z^n$. 
\end{proof}

Notice that \reflm{lm_fulldiv_ml} does not need $H$ to be a low-density matrix.
It is valid for both sparse and non-sparse matrices.
It is possible to simplify \reflm{lm_fulldiv_ml} by reducing the number of 
constraints to $L$ instead of $2^L-2$, as stated in \reflm{lm_fulldiv_ml_2}. 
Nevertheless, we believe that \reflm{lm_fulldiv_ml} is more useful for
lattice construction than \reflm{lm_fulldiv_ml_2} because we usually
start constructing a lattice in a recursive way by imposing lower diversity orders
before reaching full diversity.
Consider the $L$ largest $\Psi_k$ integer-check matrices, those whose size is $n \times (n-n/L)$
and index is $k=2^L-L-1, \ldots, 2^L-2$.
Let us refer to these matrices by $\Theta_k=\Psi_{k+2^L-L-2}$, for $k=1, \ldots, L$.

\begin{lemma} \label{lm_fulldiv_ml_2} 
A lattice $\Lambda$ is full-diversity under ML decoding on an L-diversity 
block-fading channel if and only if 
$\Theta_k x \in \Z^n$ admits $x=0 \in \R^{n-n/L}$ as the unique solution, $\forall k=1, \ldots, L$.
%%\vspace{-15pt}
\end{lemma}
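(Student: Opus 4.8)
The plan is to reduce the $2^L-2$ constraints of \reflm{lm_fulldiv_ml} down to the $L$ constraints indexed by $\Theta_1,\ldots,\Theta_L$, exploiting the fact that recovering erased lattice components only becomes easier as more blocks are known. Recall the erasure interpretation established in the proof of \reflm{lm_fulldiv_ml}: the condition ``$\Psi_k x \in \Z^n$ admits only $x=0$'' says that when the blocks \emph{not} selected by $\Psi_k$ are perfectly known and set to zero, the erased (selected) components are forced to vanish. By construction the matrices $\Theta_1,\ldots,\Theta_L$ are exactly the $\Psi_k$ that select $L-1$ of the $L$ blocks, i.e. the $\binom{L}{L-1}=L$ configurations in which a single block is known and the remaining $L-1$ are erased.

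I would first dispatch the easy direction. Since $\{\Theta_1,\ldots,\Theta_L\}$ is a subset of $\{\Psi_1,\ldots,\Psi_{2^L-2}\}$, if $\Lambda$ is full-diversity then \reflm{lm_fulldiv_ml} already guarantees uniqueness of the zero solution for every $\Psi_k$, hence for each $\Theta_k$; this gives the ``only if'' direction at once. The substance is the converse: I would show that the $L$ conditions on the $\Theta_k$ imply all $2^L-2$ conditions on the $\Psi_k$, after which \reflm{lm_fulldiv_ml} closes the argument. The key step is a containment/monotonicity observation. Index the set of erased blocks by a subset $S\subseteq\{1,\ldots,L\}$ with $1\le |S|\le L-1$, and write $\Psi_S$ for the associated shortened integer-check matrix. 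Choose any $T$ with $S\subseteq T$ and $|T|=L-1$, which exists precisely because $|S|\le L-1$; then $\Psi_T$ is one of the $\Theta_k$. Suppose $\Psi_S x_S \in \Z^n$ for some $x_S$. Define $x_T$ by inserting zeros in the block positions of $T\setminus S$; since those extra columns are multiplied by zeros, $\Psi_T x_T = \Psi_S x_S \in \Z^n$. The assumed $\Theta$-condition on $\Psi_T$ forces $x_T=0$, hence $x_S=0$, so $\Psi_S$ inherits uniqueness of the zero solution.

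Because every erasure pattern of at most $L-1$ blocks sits inside some pattern of exactly $L-1$ erased blocks, this covers all the $\Psi_k$, and \reflm{lm_fulldiv_ml} then yields full diversity. Conceptually there is no deep obstacle here: the whole argument rests on the intuitive monotonicity that more known blocks can only help recovery. The only point requiring genuine care is the index bookkeeping in the zero-padding step, namely verifying that $\Psi_T x_T$ and $\Psi_S x_S$ coincide as vectors in $\R^n$ once the additional erased blocks of $T\setminus S$ are padded with zeros; this is a matter of matching column positions in the block partition of $H$ rather than any substantive difficulty.
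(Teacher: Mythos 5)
Your proof is correct and follows exactly the route the paper intends: the paper omits its own proof, stating only that it is ``based on \reflm{lm_fulldiv_ml},'' which is precisely your reduction. The easy ``only if'' direction plus the zero-padding monotonicity argument---embedding any erasure pattern $S$ with $|S|\le L-1$ into some $T\supseteq S$ with $|T|=L-1$ so that $\Psi_T x_T=\Psi_S x_S$ forces $x_S=0$ from the $\Theta_k$ conditions---is sound and complete.
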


% \noindent
Proof for \reflm{lm_fulldiv_ml_2} is based on \reflm{lm_fulldiv_ml} and omitted here.%due to lack of space.

% \noindent
As a direct application, \reflm{lm_fulldiv_ml} is followed by \refthm{thm_fulldiv_ml} stating how to construct a full-diversity lattice under ML decoding for $L=2$. It is straightforward to generalize the proposed construction to $L>2$. In the rest of this paper, unless otherwise stated, we shall restrict the study
to $L=2$.

\begin{theorem} \label{thm_fulldiv_ml} 
Consider a double-diversity block-fading channel. Let $H=[h_{ij}]$ be the $n \times n$ integer-check
matrix of a real lattice $\Lambda$ of even rank $n$, where $h_{ij} \in \Q$, the field
of rationals. Let us decompose $H$ into four $n/2 \times n/2$ submatrices as follows
\begin{align}
H=\left[
\begin{array}{cc}
A & B \\
C & D 
\end{array}
\right]\label{eq:full_div_ml}.
\end{align}
\noindent
Assume that $A$, $B$, $C$, and $D$ have full rank. %\\
Let $\theta_1$ and $\theta_2$ be two algebraic numbers of degree $\ge 2$
such that $\theta_2/\theta_1 \notin \Q$. Then, the two lattices defined respectively by the
integer-check matrices
\begin{equation}
\left[
\begin{array}{cc}
\theta_1 A & \theta_1 B \\
\theta_2 C & \theta_2 D
\end{array}
\right]~~~\text{and}~~~
\left[
\begin{array}{cc}
\theta_1 A & \theta_2 B \\
\theta_2 C & \theta_1 D
\end{array}
\right]
\label{eq:h_fl_2}
\end{equation}
are full-diversity lattices under ML decoding.
\end{theorem}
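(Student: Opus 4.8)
The plan is to reduce the statement to the two integer-kernel conditions furnished by \reflm{lm_fulldiv_ml} at $L=2$, and then to dispatch both conditions with a single linear-independence argument over $\Q$. Since $L=2$, the shortened matrices are simply the two halves of the integer-check matrix: $\Psi_1$ consists of its first $n/2$ columns and $\Psi_2$ of its last $n/2$. By \reflm{lm_fulldiv_ml}, each candidate matrix yields a full-diversity lattice under ML decoding if and only if $\Psi_1 x \in \Z^n$ and $\Psi_2 x \in \Z^n$ each force $x = 0 \in \R^{n/2}$. For the first matrix, $\Psi_1$ stacks $\theta_1 A$ over $\theta_2 C$ and $\Psi_2$ stacks $\theta_1 B$ over $\theta_2 D$; for the second, $\Psi_1$ again stacks $\theta_1 A$ over $\theta_2 C$ while $\Psi_2$ stacks $\theta_2 B$ over $\theta_1 D$. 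In every case a block carries one of $\theta_1,\theta_2$ on top and the other below, acting on two of the full-rank rational submatrices $A,B,C,D$.

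The heart of the proof is the following claim, which I would isolate as an auxiliary lemma: if $P,Q$ are invertible matrices over $\Q$ and $\theta_1,\theta_2$ are nonzero reals with $\theta_2/\theta_1 \notin \Q$, then the only $x \in \R^{n/2}$ satisfying $\theta_1 P x \in \Z^{n/2}$ and $\theta_2 Q x \in \Z^{n/2}$ is $x=0$. To prove it I would set $u = Px$ and $v = Qx$, so that $v = (QP^{-1})u = Mu$ with $M$ rational. The hypotheses give $u = k/\theta_1$ and $v = m/\theta_2$ componentwise for integer vectors $k,m$, whence $\theta_1 m = \theta_2 (Mk)$ with $Mk \in \Q^{n/2}$. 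Because $\theta_1$ and $\theta_2$ are linearly independent over $\Q$ — which is exactly the content of $\theta_2/\theta_1 \notin \Q$ once $\theta_1 \ne 0$ — a relation $\theta_1 m_j = \theta_2 (Mk)_j$ with rational $m_j,(Mk)_j$ forces $m_j=0$ for every $j$. Hence $v = Qx = 0$, and invertibility of $Q$ gives $x=0$. The algebraic hypothesis that $\theta_1,\theta_2$ have degree $\ge 2$ serves here only to guarantee that they are nonzero and that their ratio is irrational; the sole property actually used is the $\Q$-linear independence of the pair.

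Applying this claim to the four blocks then finishes the diversity part. For the first matrix, $\Psi_1$ imposes $\theta_1 A x\in\Z^{n/2}$ and $\theta_2 C x\in\Z^{n/2}$ (take $P=A,\,Q=C$), and $\Psi_2$ imposes $\theta_1 B x,\ \theta_2 D x\in\Z^{n/2}$ (take $P=B,\,Q=D$); both force $x=0$. For the second matrix the constraint from $\Psi_1$ is identical, while $\Psi_2$ imposes $\theta_2 B x,\ \theta_1 D x\in\Z^{n/2}$, which is the same claim with the roles of $\theta_1,\theta_2$ interchanged (take $P=B,\,Q=D$, using $\theta_1/\theta_2\notin\Q$). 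Thus in both constructions the two erasure-recovery conditions of \reflm{lm_fulldiv_ml} hold, so the lattices are full-diversity under ML decoding.

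What remains, and what I expect to be the genuine subtlety rather than the linear algebra above, is to confirm that the two displayed matrices actually define rank-$n$ lattices, i.e.\ that they are nonsingular over $\R$. For the first matrix this is immediate, since it equals $\mathrm{diag}(\theta_1 I,\theta_2 I)\,H$ and hence has determinant $\theta_1^{n/2}\theta_2^{n/2}\det(H)\ne 0$. The second matrix is more delicate because it is not a row scaling of $H$; a Schur-complement expansion gives $\det = \theta_1^{n}\,\det(A)\,\det\!\big(D-(\theta_2/\theta_1)^2\,CA^{-1}B\big)$, and the last factor is a fixed rational polynomial evaluated at $(\theta_2/\theta_1)^2$. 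I would therefore argue that nonsingularity holds for all but finitely many admissible ratios $\theta_2/\theta_1$, namely those avoiding the generalized eigenvalues of the pencil $(CA^{-1}B,D)$, so the construction is always realizable in practice. Pinning down that this degenerate set is avoided — or showing it is empty under the stated hypotheses — is the one place where care beyond the $\Q$-independence argument is required.
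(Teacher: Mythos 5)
Your kernel argument is correct and is essentially the paper's own proof. The paper likewise reduces the claim via \reflm{lm_fulldiv_ml} to the two constraints $\Psi_1 x \in \Z^n$ and $\Psi_2 x \in \Z^n$, observes that $\theta_1 A x \in \Z^{n/2}$ forces $x \in \frac{1}{\theta_1}\Q^{n/2}$ (rationality and invertibility of $A$) while $\theta_2 C x \in \Z^{n/2}$ forces $x \in \frac{1}{\theta_2}\Q^{n/2}$, and concludes $x=0$ from $\theta_1\Q \cap \theta_2\Q = \{0\}$; your auxiliary lemma with $u=Px$, $v=Qx$, $M=QP^{-1}$ is the same argument written multiplicatively. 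Your side remark that only $\theta_1,\theta_2\neq 0$ and $\theta_2/\theta_1\notin\Q$ are actually used (not the full degree-$\ge 2$ hypothesis) is also accurate; the paper's own simulations even take $\theta_1=1$, which has degree one.

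Where you go beyond the paper is the nonsingularity question in your final paragraph, and your caution there is fully justified: the paper's proof ignores this point entirely, yet \reflm{lm_fulldiv_ml} presupposes that the matrix under consideration is the integer-check matrix of a rank-$n$ lattice, i.e., invertible, and for the second matrix in \refeq{eq:h_fl_2} this can fail under the stated hypotheses. Your degenerate set is \emph{not} empty. Concretely, take $n=2$ with the $1\times 1$ blocks $A=B=C=[1]$, $D=[3]$ (so $H$ is invertible and all four blocks are full rank), and $\theta_1=\sqrt{2}$, $\theta_2=\sqrt{6}$, both algebraic of degree two with $\theta_2/\theta_1=\sqrt{3}\notin\Q$; then the second matrix is $\left[\begin{smallmatrix} \sqrt{2} & \sqrt{6} \\ \sqrt{6} & 3\sqrt{2} \end{smallmatrix}\right]$, whose determinant is $6-6=0$. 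This matches your Schur-complement formula: $(\theta_2/\theta_1)^2=3$ is rational and is precisely the root of $\det\bigl(D-\lambda CA^{-1}B\bigr)=3-\lambda$. Note also that the kernel conditions of \reflm{lm_fulldiv_ml} still hold in this example, so those conditions genuinely do not imply invertibility; nonsingularity is an independent requirement. Hence the theorem, read literally, needs a supplementary hypothesis for the second construction --- e.g., that $(\theta_2/\theta_1)^2$ avoids the at most $n/2$ roots of $\det\bigl(D-\lambda CA^{-1}B\bigr)$, which excludes only finitely many ratios, exactly as you suggest --- or it must be read as implicitly assuming that both matrices in \refeq{eq:h_fl_2} define lattices. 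In short: your argument needs no repair; the gap you flagged is real, and it sits in the paper's statement and proof rather than in your proposal.
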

\begin{proof}
Consider the first constraint $\Psi_1 x \in \Z^n$, where $x \in \R^{n/2}$ and
\[
\Psi_1=\left[ 
\begin{array}{c}
\theta_1 A \\
\theta_2 C 
\end{array}
\right].
\]
The upper half is $\theta_1 A x \in \Z^{n/2}$, we get $x \in \bar{\theta}_1 \Q^{n/2}$.
Similarly, the lower half is $\theta_2 C x \in \Z^{n/2}$, we get $x \in \bar{\theta}_2 \Q^{n/2}$.
But since $\theta_1 \Q \cap \theta_2 \Q = \{0\}$ we obtain $x=0$.\\
A similar reasoning can be made for $\Psi_2$. Then, applying \reflm{lm_fulldiv_ml} completes the proof of full diversity.
\end{proof}

\noindent
The weak condition $\theta_2/\theta_1 \notin \Q$ enables us to use conjugate
algebraic numbers from the same number field, e.g., take $\theta_1=\frac{1+\sqrt{5}}{2}$
and $\theta_2=\frac{1-\sqrt{5}}{2}$ in $\Q(\sqrt{5})$. A stronger condition
may be defined as $\Q(\theta_1) \cap \Q(\theta_2)=\Q$ and could be beneficial for the coding gain but it is not required for full-diversity. 

Another interesting method to construct full-diversity LDLC for ML decoding which is derived from \refthm{thm_fulldiv_ml} is given below. In this method, $\theta_1$  and $\theta_2$ are not explicitly used rather they are embedded within nonzero coefficients of the integer-check matrix such that the resulting LDLC is a Latin square LDLC
% \footnote{A LDLC is called {\em Latin square LDLC} if all the row degrees and column degrees of its integer-check matrix are equal to a common degree $d$ and if every row and column of the integer-check matrix has the same nonzero values, except for a possible change of order and random signs. The sorted sequence of these values is referred to as the generating sequence of the Latin square LDLC \cite{Sommer2008}.}.
This construction method is given in following theorem. 

\begin{theorem} \label{thm_fulldiv_ml_Latin} 
Consider a double-diversity block-fading channel and let $H=[h_{ij}]$ be the $n \times n$ integer-check matrix of a real lattice $\Lambda$ of even rank $n$ and degree $d$. We decompose $H$ into four $n/2 \times n/2$ submatrices $A, B, C$ and $D$ as in \refeq{eq:full_div_ml} where each submatrix is full rank. 
% Either $A$ or $B$ (and $C$ or $D$) is a random regular submatrix of degree $d-1$ and the other submatrix is a permutation matrix such that $H$ 
Further assume that $A$ and $D$ are random regular matrices of degree $d - 1$, $B$ and $C$ are permutation matrices such that $H$ %
is an integer-check matrix of a Latin square LDLC with the generating sequence $\{h_1, h_2, \cdots, h_d \} = \left\{1, \theta, \cdots, \theta \right\}$ where $\theta = \frac{1}{\sqrt{d}} \text{ if } d \text{ is odd otherwise } \theta = \frac{1}{\sqrt{d+1}}$. 
Then, the LDLC defined by such an integer-check matrix is a full-diversity lattice under ML decoding.
\end{theorem}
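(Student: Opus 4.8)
The plan is to reduce the claim to the two erasure conditions of \reflm{lm_fulldiv_ml} specialized to $L=2$, namely that $\Psi_1 x\in\Z^n$ and $\Psi_2 x\in\Z^n$ each admit only $x=0\in\R^{n/2}$, where $\Psi_1$ is the first $n/2$ columns of $H$ (the blocks $A$ over $C$) and $\Psi_2$ is the last $n/2$ columns (the blocks $B$ over $D$). The construction is engineered so that the single algebraic number $\theta$ plays exactly the role of the ratio $\theta_2/\theta_1$ in \refthm{thm_fulldiv_ml}: within each $\Psi_k$ one block is scaled by $\theta$ and the other by $1$.

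First I would read off the coefficient pattern forced by the hypotheses. Because $B$ and $C$ are permutation matrices while $A$ and $D$ are regular of degree $d-1$, each row of $[A\,|\,B]$ carries $d-1$ entries from $A$ and a single entry from $B$; matching this against the generating sequence $\{1,\theta,\dots,\theta\}$ shows that the lone value-$1$ entry lies in the permutation block and every nonzero entry of $A$ equals $\pm\theta$. Hence $A=\theta\hat A$ and $D=\theta\hat D$ with $\hat A,\hat D$ integer matrices that are full rank (inheriting the full rank of $A,D$), while $B,C$ are signed permutation matrices, invertible over $\Z$. Now solving $\Psi_1 x\in\Z^n$ with $x\in\R^{n/2}$: the lower block $Cx\in\Z^{n/2}$ gives $x\in\Z^{n/2}\subseteq\Q^{n/2}$ since $C^{-1}$ is integral, while the upper block $\theta\hat A x\in\Z^{n/2}$ gives $x\in\frac{1}{\theta}\hat A^{-1}\Z^{n/2}\subseteq\frac{1}{\theta}\Q^{n/2}$. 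Thus $x\in\Q^{n/2}\cap\frac{1}{\theta}\Q^{n/2}$, which is $\{0\}$ as soon as $1/\theta$ is irrational, so $x=0$. The verification for $\Psi_2$ is identical with the roles of the two blocks swapped ($B$ supplies the integer constraint, $D=\theta\hat D$ the $\frac{1}{\theta}\Q^{n/2}$ constraint), again forcing $x=0$; \reflm{lm_fulldiv_ml} then yields full diversity.

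The main obstacle --- indeed the only substantive point --- is to certify that $1/\theta\notin\Q$, i.e. that $\sqrt d$ (for odd $d$) or $\sqrt{d+1}$ (for even $d$) is irrational, which is precisely the incarnation of the condition $\theta_2/\theta_1\notin\Q$ of \refthm{thm_fulldiv_ml} in this embedded form. Provided $d$, respectively $d+1$, is not a perfect square --- which holds for the small degrees relevant to LDLC --- the parity-dependent choice of $\theta$ makes $1/\theta$ a genuine irrational, so that $\frac{1}{\theta}\Q\cap\Q=\{0\}$ and the two intersections above collapse to the origin. The remaining bookkeeping is the routine confirmation that $\hat A,\hat D$ are full rank and that $B,C$ are $\Z$-invertible, both immediate from the stated hypotheses.
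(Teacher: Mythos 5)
Your proof is correct and follows exactly the route the paper intends: the paper actually omits the proof of this theorem, stating only that it is ``similar to'' \refthm{thm_fulldiv_ml}, and your argument is precisely that reduction --- reading off that $A=\theta\hat A$, $D=\theta\hat D$ with $B,C$ $\Z$-invertible, so the construction embeds $\theta_1=\theta$, $\theta_2=1$, and then invoking \reflm{lm_fulldiv_ml} via the same $\Q$-intersection argument used for \refthm{thm_fulldiv_ml}. Your caveat that one must additionally require $d$ (resp.\ $d+1$) not be a perfect square is a genuine gap in the theorem's statement that the paper silently glosses over, though it is harmless for the practically relevant degrees $d\le 7$.
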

% \begin{proof}

The proof of this theorem is similar to the one in \refthm{thm_fulldiv_ml} and omitted here.
% \end{proof}
% 
% 
% 
% 
%\\
% Theorem \refthm{thm_fulldiv_ml} is still valid if we replace $\theta_1$ by 1. 
% For the second algebraic number, the simplest construction is to consider 
% a quadratic algebraic integer, $\theta_2=\sqrt{p}$, where $p$ is an integer $\ge 2$.
% 
As shown in the next section, the same construction as \refthm{thm_fulldiv_ml} which combines $H$, $\theta_1$, and $\theta_2$, leads to a full-diversity lattice under iterative belief propagation decoding. A supplementary condition on the binary image of $H$ is required to accomplish full-diversity with iterative decoding.

\section{Full-Diversity Construction of LDLC under Iterative Decoding}
\noindent
We keep restricting the study to the default diversity order $L=2$ unless otherwise stated.
Hereafter, we consider only real LDLC under iterative decoding, 
i.e., real lattices $\Lambda$ with a sparse $n \times n$  integer-check matrix $H$.
The lattice constraint $Hx \in \Z^n$ admits a bipartite graph representation as follows:
(i) Draw $n$ vertices (variable nodes) on the left representing the $n$ lattice components $x_j$, $j=1, \ldots, n$.
(ii) Draw $n$ vertices (check nodes) on the right representing the $n$ rows $h_i$ of $H$ that define the $n$ LDLC constraints $h_i \cdot x = \sum_{j=1}^n h_{ij} x_j \in \Z$, $i=1, \ldots, n$. (iii) Link $x_j$ and $h_i$ by an edge if $h_{ij} \ne 0$. That edge has a multiplicative weight $h_{ij}$.

The factor graph \cite[Chap. 2]{Richardson2008} defined above is used for 
iterative belief propagation decoding of $\Lambda$ \cite{Sommer2008}. Usually, $H$ is regular
with $d$ nonzero entries per row and $d$ nonzero entries per column, $d \ll n$.
Let $H_b$ be the incidence matrix of the factor graph, i.e., $H_b=[b_{ij}]$ where
$b_{ij}=1$ if $h_{ij} \ne 0$ otherwise $b_{ij}=0$.

\begin{definition}[Binary Image of a Lattice]
The binary image $\mathcal{C}(\Lambda)$ of $\Lambda$ is a binary code defined by its
integer-check matrix $H_b$. 
\end{definition}
As a direct consequence, the binary image $\mathcal{C}(\Lambda)$ has dimension $0$ 
($0$-rate) and length $n$. %\\
In general, for regular and irregular LDLC, the degree distribution of the binary image
from an edge perspective \cite{Richardson2008} 
is given by $\lambda(t)=\sum_i \lambda_i t^{i-1}$ on the left
and $\rho(t)=\sum_j \rho_j t^{j-1}$ on the rigth, where $\lambda(1)=\rho(1)=1$ 
and $\sum_i \frac{\lambda_i}{i} = \sum_j \frac{\rho_j}{j}$. 
\begin{definition}[Erasure Channel Condition]
We say that $\Lambda$ satisfies the Erasure Channel (EC) condition
if the binary code $\mathcal{C}(\Lambda)$ achieves full diversity \cite{Boutros2009ita} 
after a finite or an infinite number of decoding iterations.
\end{definition}
The EC condition is a necessary condition (but not sufficient) for $\Lambda$ to
achieve full diversity. In a way similar to the study of full-diversity LDPC codes,
the so-called {\em root-LDPC} \cite{Boutros2009ita}\cite{Boutros2009}, we redefine
full-diversity under iterative belief propagation. The symbol error probability
referred to as $P_{es}$ is the error probability per lattice component:
\begin{definition}[Full-diversity Lattice under Iterative Decoding]
Consider a fading channel with $L$ independent fading coefficients per lattice point.
$\Lambda$ is a full-diversity lattice under iterative decoding if the 
symbol error probability $P_{es}$ at the iterative probabilistic decoder output 
is proportional~to~$\frac{1}{\gamma^L}$, for $\gamma \gg 1$.
\end{definition}

Before analyzing the construction of \refthm{thm_fulldiv_bp} under iterative decoding,
let us take a look at LDLC lattices with a random structure. For random lattices
and asymptotically large $n$, $\mathcal{C}(\Lambda)$ is an ensemble of 0-rate binary LDPC codes
with left and right degree distributions defined by the polynomials $\lambda(t)$
and $\rho(t)$ respectively. If the degree distribution is well chosen, a 0-rate ensemble
can achieve the capacity of an ergodic binary erasure channel (BEC) 
with erasure probability $\epsilon_0$,
for any $\epsilon_0 < 1$. When $\Lambda$ is transmitted on a block-fading channel
with diversity order $L$, the random 0-rate LDPC ensemble $\mathcal{C}(\Lambda)$
will observe an ergodic binary erasure channel whose parameter is
\begin{equation}
\epsilon_0=\frac{n-\frac{n}{L}}{n}=1-\frac{1}{L}.
\end{equation}
This value of $\epsilon_0$ is in accordance with the size of the largest integer-check
matrices $\Theta_k$ used in \reflm{lm_fulldiv_ml_2} under ML decoding.

The diversity population evolution (DPE) tracks the fraction of full-diversity bits
with the number of decoding iterations. The DPE renders a standard Density Evolution (DE) on the BEC,
where $\epsilon_0=1-\frac{1}{L}$, and
\begin{equation}
\label{equ_DPE_L}
\epsilon_{i+1}=\left(1-\frac{1}{L}\right) \lambda\big(1-\rho(1-\epsilon_i)\big).
\end{equation}
The necessary condition EC for full diversity is achieved if $\epsilon_i \rightarrow 0$
when $i \rightarrow +\infty$, $i$ being the decoding iteration number. From (\ref{equ_DPE_L})
it is easy to prove the following propositions:
\begin{proposition}\label{pro_ec_d}
Consider a regular random LDLC ensemble with degree $d \ge 2$.
For $L=2$, the EC condition for full-diversity is satisfied
if and only if $d\le 7$.
\end{proposition}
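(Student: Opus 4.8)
The plan is to specialize the diversity population evolution \refeq{equ_DPE_L} to the regular ensemble and turn the EC condition into a one-dimensional fixed-point inequality indexed by $d$. For a degree-$d$ regular LDLC the edge-perspective degree distributions are concentrated, $\lambda(t)=\rho(t)=t^{d-1}$, and for $L=2$ the channel erasure is $\epsilon_0=1-\tfrac1L=\tfrac12$; substituting into \refeq{equ_DPE_L} yields $\epsilon_{i+1}=f_d(\epsilon_i)$ with $f_d(\epsilon)=\tfrac12\big(1-(1-\epsilon)^{d-1}\big)^{d-1}$. This is exactly the binary-erasure density-evolution recursion of the $(d,d)$-regular LDPC ensemble observed through a channel of erasure probability $\tfrac12$, and the EC condition is by definition the statement $\epsilon_i\to0$.

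First I would record that on $[0,\tfrac12]$ the map $f_d$ is continuous, strictly increasing, satisfies $f_d(0)=0$, and sends $[0,\tfrac12]$ into $[0,\tfrac12)$ because its base $1-(1-\epsilon)^{d-1}$ is strictly below $1$; in particular $f_d(\tfrac12)<\tfrac12$, so $\epsilon_0$ is not a fixed point. Because $f_d$ is monotone and $\epsilon_1=f_d(\epsilon_0)<\epsilon_0$, the iterates form a decreasing sequence converging to the largest fixed point of $f_d$ in $[0,\tfrac12)$. Hence $\epsilon_i\to0$ if and only if $f_d$ has no positive fixed point below $\tfrac12$, i.e.\ if and only if $f_d(\epsilon)<\epsilon$ for every $\epsilon\in(0,\tfrac12)$; for $\epsilon\ge\tfrac12$ the inequality is automatic since $f_d<\tfrac12$. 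The proposition is therefore equivalent to locating, as a function of the integer $d$, the threshold at which this scalar inequality first fails.

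For the failure direction I would show $f_d(\tfrac25)\ge\tfrac25$ uniformly in $d\ge8$. By Bernoulli's inequality $(1-u^m)^m\ge1-mu^m$ with $u=\tfrac35$ and $m=d-1\ge7$; since $mu^m$ is decreasing for $m\ge2$ one has $mu^m\le7(\tfrac35)^7<\tfrac15$, whence $f_d(\tfrac25)\ge\tfrac12(1-\tfrac15)=\tfrac25$. Thus a positive fixed point exists for every $d\ge8$ and the EC condition fails. For the success direction I would establish $f_d(\epsilon)<\epsilon$ on $(0,\tfrac12)$ for $d\le7$: the small-$\epsilon$ regime is harmless, since $f_d(\epsilon)\sim\tfrac12(d-1)^{d-1}\epsilon^{d-1}$ is of higher order than $\epsilon$, while the binding region is $\epsilon\approx0.4$, where $d\mapsto f_d(\epsilon)$ is increasing, so the strongest case among $d\le7$ is $d=7$. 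It then remains to verify the single inequality $f_7(\epsilon)<\epsilon$ throughout $(0,\tfrac12)$, which reduces to showing that the smooth function $f_7(\epsilon)-\epsilon$, negative near $0$ and at $\tfrac12$, stays negative at its unique interior stationary point determined by $f_7'(\epsilon)=1$.

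The main obstacle is controlling the degree dependence in the success direction: unlike the failure bound, $d\mapsto f_d(\epsilon)$ is not monotone on all of $(0,\tfrac12)$, since for very small $\epsilon$ a higher degree actually decreases $f_d$, so I must split off the high-order small-$\epsilon$ tail before invoking monotonicity on the binding region, and the remaining transcendental inequality $f_7(\epsilon)<\epsilon$ has no closed-form root and must be settled by bounding one smooth single-variable function. Equivalently, the whole argument can be phrased as showing that the BEC threshold $\epsilon^\star(d)=\min_{\epsilon\in(0,1)}\epsilon/\big(1-(1-\epsilon)^{d-1}\big)^{d-1}$ of the $(d,d)$ ensemble is decreasing in $d$ and crosses $\tfrac12$ between $d=7$ and $d=8$.
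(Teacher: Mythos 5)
Your proposal is correct and takes essentially the same approach as the paper: Proposition~\ref{pro_ec_d} is stated there as a direct consequence of the diversity population evolution \refeq{equ_DPE_L} specialized to the regular ensemble, i.e., the scalar recursion $\epsilon_{i+1}=\tfrac12\bigl(1-(1-\epsilon_i)^{d-1}\bigr)^{d-1}$ with $\epsilon_0=\tfrac12$, whose convergence to zero is precisely the fixed-point question you analyze. The paper leaves the verification implicit (``it is easy to prove''), so your monotone-convergence reduction, the Bernoulli bound giving $f_d(\tfrac25)\ge\tfrac25$ for $d\ge8$, and the final one-variable check $f_7(\epsilon)<\epsilon$ on $(0,\tfrac12)$ supply exactly the details it omits.
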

The above proposition tells us that the diversity tunnel is open
for all regular random LDLC ensembles when $2 \le d \le 7$. The tunnel is closed
for $d\ge 8$. 
%%An illustration is given in Fig. \ref{fig_tunnel1}.
\begin{proposition}\label{pro_fd_L}
Consider a regular random LDLC ensemble with degree $d=3$.
The EC condition for full-diversity is satisfied
if and only if $2 \le L \le 6$.
\end{proposition}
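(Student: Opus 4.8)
The plan is to specialize the diversity population evolution recursion \refeq{equ_DPE_L} to the regular degree-$3$ ensemble and then reduce the open-tunnel question to a one-dimensional extremal problem. For a regular LDLC of degree $d=3$ the edge-perspective degree distributions are $\lambda(t)=\rho(t)=t^{2}$, so that $\rho(1-\epsilon_i)=(1-\epsilon_i)^{2}$ and \refeq{equ_DPE_L} collapses to
\[
\epsilon_{i+1}=f(\epsilon_i),\qquad
f(\epsilon)=\Bigl(1-\tfrac{1}{L}\Bigr)\bigl(1-(1-\epsilon)^{2}\bigr)^{2}
=\Bigl(1-\tfrac{1}{L}\Bigr)\epsilon^{2}(2-\epsilon)^{2},
\]
initialized at $\epsilon_0=1-\tfrac{1}{L}$. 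The EC condition holds precisely when this sequence converges to $0$, so the whole statement reduces to characterizing the parameters $L$ for which $\epsilon_i\to 0$.

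First I would invoke the standard monotone density-evolution convergence criterion. Since $f$ is a composition of monotone increasing maps on $[0,1]$ and $f(0)=0$, the iterates $\{\epsilon_i\}$ are monotone and bounded in $[0,\epsilon_0]$, hence converge to a fixed point of $f$. Because $f(\epsilon)<\epsilon$ near $0$, the limit is $0$ if and only if $f$ has no positive fixed point in $(0,\epsilon_0]$, equivalently $f(\epsilon)<\epsilon$ for every $\epsilon\in(0,\epsilon_0]$. Dividing this inequality by $\epsilon>0$ rewrites the open-tunnel condition as
\[
\Bigl(1-\tfrac{1}{L}\Bigr)\,g(\epsilon)<1\ \ \text{for all }\epsilon\in(0,\epsilon_0],
\qquad g(\epsilon):=\epsilon(2-\epsilon)^{2},
\]
i.e. $\max_{[0,\epsilon_0]} g(\epsilon) < \tfrac{L}{L-1}$.

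Next I would solve the extremal problem for $g$ on $[0,\epsilon_0]$. Differentiating, $g'(\epsilon)=3\epsilon^{2}-8\epsilon+4$ vanishes at $\epsilon=\tfrac{2}{3}$ and $\epsilon=2$, so on $[0,1]$ the function $g$ rises to its unique interior maximum $g(\tfrac{2}{3})=\tfrac{32}{27}$ and then falls. Since $\epsilon_0=1-\tfrac{1}{L}\ge\tfrac{2}{3}$ exactly when $L\ge 3$, for $L\ge 3$ the binding value is the interior maximum and the condition reads $\tfrac{32}{27}<\tfrac{L}{L-1}$, i.e. $5L<32$, i.e. $L\le 6$. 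For $L=2$ the maximizer is the endpoint, where $g(\tfrac{1}{2})=\tfrac{9}{8}<2=\tfrac{L}{L-1}$, so the tunnel is open. Collecting the cases, and noting that $5L<32$ fails for $L\ge 7$ (there the interior value $\tfrac{32}{27}$ strictly exceeds $\tfrac{L}{L-1}$, forcing a positive fixed point), yields the EC condition exactly for $2\le L\le 6$.

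The main obstacle I anticipate is not any single computation but the careful bookkeeping of where the maximum of $g$ sits relative to the moving endpoint $\epsilon_0=1-\tfrac{1}{L}$: for small $L$ the extremum is attained at $\epsilon_0$, whereas for $L\ge 3$ it is the interior critical value $\tfrac{32}{27}$, and one must verify that the monotone growth of the prefactor $1-\tfrac{1}{L}$ makes the interior maximum the decisive constraint. Once the value $\tfrac{32}{27}$ is identified, the clean cutoff $L<\tfrac{32}{5}=6.4$, hence $L\le 6$, follows at once, and since $6.4$ is non-integral there is no marginal case to resolve at the boundary.
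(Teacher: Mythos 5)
Your proof is correct and takes exactly the route the paper intends: the paper gives no explicit argument beyond asserting that the proposition follows from the DPE recursion \refeq{equ_DPE_L}, and your specialization to $\lambda(t)=\rho(t)=t^{2}$, the reduction to the fixed-point criterion $\left(1-\tfrac{1}{L}\right)\epsilon(2-\epsilon)^{2}<1$ on $\left(0,1-\tfrac{1}{L}\right]$, and the extremal value $g(2/3)=32/27$ yielding the cutoff $L<32/5$ are the natural and rigorous working-out of that claim. The endpoint-versus-interior-maximum bookkeeping and the monotone-iterate convergence argument are both handled correctly, so there is nothing to flag.
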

The above proposition tells us that a 3-regular random LDLC can never
be full-diversity for $L\ge 7$. 
%%An illustration is given in Fig. \ref{fig_tunnel2}.
% 
%%\begin{figure}[!h]
%%\begin{center}
%%\includegraphics[angle=270,width=0.75\columnwidth]{tunnel_d=2_3_7_8_L=2.eps}
%%\end{center}
%%\caption{Diversity population evolution for random d-regular LDLC ensembles,
%%where $\lambda(x)=\rho(x)=x^{d-1}$ and $L=2$.}
%%\label{fig_tunnel1}
%%\end{figure}
% 
%%\begin{figure}[!h]
%%\begin{center}
%%\includegraphics[angle=270,width=0.75\columnwidth]{tunnel_L=2_3_6_7_d=3.eps}
%%\end{center}
%%\caption{Diversity population evolution for random 3-regular LDLC ensembles,
%%where $\lambda(x)=\rho(x)=x^2$ and $L=2 \ldots 7$.}
%%\label{fig_tunnel2}
%%\end{figure}
% 
% \noindent
Irregular random LDLC may be useful to increase the fraction of full-diversity lattice components
at the first decoding iterations and to increase the upper limit for achievable $d$ and $L$.
As an example, $\lambda(t)=0.418683\cdot t+0.162635 \cdot t^2+0.418683\cdot t^5$ 
and $\rho(t)=t^2$ has a better DPE tunnel than the fully 3-regular case.

% \noindent
Now, we can state an equivalent to \refthm{thm_fulldiv_ml} in the iterative decoding
context. 
%However, we do not have a proof that the construction mentioned below does exhibit full-diversity and hence conjecture it. 
The construction mentioned here is given for $L=2$ and any average weight $d \ge 2$.
% 
% The construction below is given for $L=2$ and any average weight $d \ge 2$.
% 
% \begin{conjecture} %\label{thm_fulldiv_bp} 
% \newpage
\begin{theorem} \label{thm_fulldiv_bp} 
Consider a double-diversity block-fading channel. Let $H=[h_{ij}]$ be the $n \times n$ integer-check matrix of a real lattice $\Lambda$ of even rank $n$ and degree $d$, where $h_{ij} \in \Q$, the field of rationals. 
% Consider a double-diversity block-fading channel. Let $H=[h_{ij}]$ be the $n \times n$ integer-check matrix of a real Latin square LDLC $\Lambda$ of even rank $n$. 
% 
Let us decompose $H$ into four $n/2 \times n/2$ submatrices as follows
\begin{align}
H=\left[
\begin{array}{cc}
A & B \\
C & D
\end{array}
\right] \label{eq:fd_bp_H}.
\end{align}

\noindent
Assume that the binary image of $H$ has the following structure:
\[
H_b=\left[
\begin{array}{cc|cc}
\Pi_1 & 0     &  B_2   &  \Pi_4 \\
 B_1  & \Pi_2 &  \Pi_3 &  0     \\ \hline
 0    & \Pi_6 &  \Pi_7 &  B_4 \\
\Pi_5 &  B_3  &  0     &  \Pi_8 
\end{array}
\right].
\]
where $\Pi_k, k \in \{1,\ldots,8\}$ are permutation matrices and $B_k, k \in \{1,\ldots,4\}$ are regular random matrices with weight $d-2$.\\
Let $\theta_1$ and $\theta_2$ be two algebraic numbers of degree $\ge 2$
such that $\theta_2/\theta_1 \notin \Q$. Then, the two lattices defined respectively by the
integer-check matrices
\begin{equation}
\left[
\begin{array}{cc}
\theta_1 A & \theta_1 B \\
\theta_2 C & \theta_2 D
\end{array}
\right]~~~\text{and}~~~
\left[
\begin{array}{cc}
\theta_1 A & \theta_2 B \\
\theta_2 C & \theta_1 D
\end{array}
\right] \label{eqn_h_bp}
\end{equation}
are full-diversity lattices under iterative probabilistic decoding.
\end{theorem}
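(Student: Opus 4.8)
The plan is to combine the algebraic mechanism of \refthm{thm_fulldiv_ml}, which already secures full diversity under ML decoding, with a combinatorial analysis of the root-check structure of $H_b$, which will guarantee that belief propagation reproduces the ML erasure recovery in a bounded number of iterations. First I would establish the ML baseline: the permutation blocks sitting on the diagonal positions of $H_b$ force each of $A$, $B$, $C$, $D$ to be full rank, so the hypotheses of \refthm{thm_fulldiv_ml} hold for both matrices in \refeq{eqn_h_bp}, giving $P_e(\Lambda) \doteq 1/\gamma^2$ under ML decoding. This fixes the target slope and shows it is information-theoretically attainable; the remaining work is to prove that the suboptimal iterative decoder reaches the same slope.

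Second, and this is the heart of the argument, I would run the peeling (erasure) decoder on $H_b$ for the two erasure patterns induced by a deep fade. Label the four column blocks $1,2,3,4$, each of size $n/4$, so that blocks $1,2$ carry fading $\alpha_1$ and blocks $3,4$ carry fading $\alpha_2$. When $\alpha_1$ is deeply faded, blocks $1,2$ are erased while $3,4$ are known; I would show that check-block $1$ is a root check recovering block $1$ through $\Pi_1$ and check-block $3$ is a root check recovering block $2$ through $\Pi_6$, each in a single step, because their remaining incidences ($B_2,\Pi_4$ and $\Pi_7,B_4$) fall entirely on the known blocks. Symmetrically, when $\alpha_2$ is deep, check-blocks $2$ and $4$ recover blocks $3$ and $4$ through $\Pi_3$ and $\Pi_8$. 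The zeros placed at the block positions $(1,2),(2,4),(3,1),(4,3)$ are precisely what prevents each root check from touching a second erased coordinate, and since the $\Pi_k$ are permutations the root checks cover every erased coordinate bijectively. Recovery is therefore complete after one peeling step and independent of $d$, so the EC condition holds for every $d \ge 2$.

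Third, I would lift this erasure recovery to the genuine Gaussian belief-propagation decoder at high SNR with one fading deep. Here the role of $\theta_1,\theta_2$ is to make each root-check constraint uniquely solvable over the reals: a root check sees exactly one poorly observed coordinate together with $d-1$ reliably observed ones, and the condition $\theta_2/\theta_1 \notin \Q$ inherited from \refthm{thm_fulldiv_ml} pins the single faded coordinate down without residual ambiguity once its neighbors are known. I would then argue that, as $\alpha_2 \to 0$ (resp.\ $\alpha_1 \to 0$), the messages leaving the root checks concentrate on the true values of the faded coordinates after a bounded number of iterations, so that $P_{es}$ is dominated by the event that both fadings are simultaneously deep, yielding $P_{es} \propto 1/\gamma^2$.

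The main obstacle is this third step, since the EC condition is only necessary: I must show that the continuous-alphabet BP decoder, and not merely the binary peeling decoder, inherits full diversity. The delicate point is to control the Gaussian-mixture messages attached to the faded coordinates and prove they stay sharply peaked at the correct lattice value uniformly as $\gamma \to \infty$, rather than being spoiled by the $d-2$ auxiliary edges coming from the random blocks $B_k$. I expect to handle this by bounding the tail contribution of the non-root checks and reducing the analysis, via the single-step root recovery of the second step, to the algebraic uniqueness supplied by $\theta_1$ and $\theta_2$.
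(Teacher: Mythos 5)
There is a genuine gap, and it sits exactly where you flag it yourself: your third step is not an auxiliary technicality, it \emph{is} the theorem, and your proposal never actually proves it. Your first two steps (full rank of $A,B,C,D$ from the monomial/triangular block structure, and the one-step root-check recovery of each erased column block under peeling) establish only the ML baseline and the EC condition, which the paper explicitly states is necessary but not sufficient for full diversity under iterative decoding. Your outline for closing the gap --- ``messages leaving the root checks concentrate on the true values as $\alpha_1\to 0$ or $\alpha_2\to 0$, so $P_{es}$ is dominated by the event that both fadings are simultaneously deep'' --- is an outage heuristic, not a proof of the slope of the error probability: diversity order is a statement about how the error probability, \emph{averaged over the fading distribution}, decays with $\gamma$, and one must control the conditional error probability over the whole fading plane, not merely in the deep-fade limits. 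Nothing in your plan produces that quantitative estimate, and bounding ``tail contributions of the non-root checks'' is left entirely unspecified.

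The paper's proof supplies precisely the missing mechanism, and by a different route: it analyzes the first iteration of the actual Gaussian message-passing decoder (made tractable by a sub-optimal variant that keeps only the largest peak of each Gaussian mixture). A variable node $x_3$ faded by $\alpha_2$ receives check messages whose means are indexed by integers $z_1,z_2,z_3$; the condition $\theta_2/\theta_1\notin\Q$ is used to show the product's amplitude $\hat{A}(z_1,z_2,z_3)$ is uniquely maximized at $z_1=z_2=z_3=0$ (integer-ambiguity resolution, not erasure solvability). The decisive step is then that the mean of the combined Gaussian takes the form
\begin{equation*}
\hat{m}_3 \propto \left( \omega_1 \alpha_1^2 + \omega_2 \alpha_2^2 + \omega_3 \alpha_1^2\alpha_2^2 \right) x_3 + \eta_3^{\prime},
\end{equation*}
i.e., the $H_b$ structure guarantees every variable node combines observations through \emph{both} fadings, so the effective coefficient is a generalized $\chi^2$ of order $4$, and classical MRC results give a decision error rate $\doteq 1/\gamma^2$; residual noise effects are absorbed into a term $e^{-\beta\gamma}$ that is subdominant. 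This maximum-ratio-combining argument at a single variable node, rather than multi-iteration erasure propagation, is what actually delivers the diversity order; your proposal would need to reconstruct something equivalent to it to be complete.
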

\begin{proof}
See \refapp{app_ldlc_iter}.
\end{proof}
Other full-diversity constructions may also exist but we described one of the simplest methods in \refthm{thm_fulldiv_bp}. It is interesting to note that the full-diversity LDLC constructed as per the method given in \refthm{thm_fulldiv_ml} also shows full-diversity property under iterative probabilistic decoding.

\begin{theorem} \label{thm_fulldiv_bp_L_2} 
A LDLC constructed as per the method proposed in \refthm{thm_fulldiv_ml} is also full-diversity under iterative probabilistic decoding.
\end{theorem}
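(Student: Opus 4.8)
The plan is to reduce full diversity under iterative decoding to a block-erasure recovery problem, exactly as in the erasure interpretation of \reflm{lm_fulldiv_ml} and in the root-LDPC analysis, and then to show that the two ingredients carried by the construction of \refthm{thm_fulldiv_ml} — a binary image whose erasure decoder converges, and the incommensurability $\theta_2/\theta_1 \notin \Q$ — are together sufficient to let belief propagation recover an erased fading block at a symbol error rate decaying as $\frac{1}{\gamma^2}$.

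First I would fix $L=2$ and assume, without loss of generality, that fading block $2$ is in a deep fade while block $1$ is received noiselessly. As in the proof of \reflm{lm_fulldiv_ml}, the deeply faded symbols behave as erasures and the remaining symbols as perfectly known values, so belief propagation degenerates into an erasure-recovery process on the factor graph of $H$. Because multiplying rows of $H$ by $\theta_1$ or $\theta_2$ leaves the support of $H$ unchanged, the binary image of either matrix in \refeq{eq:h_fl_2} coincides with that of the base LDLC $H$; hence the dynamics of which variable nodes get resolved are governed by the same recursion \refeq{equ_DPE_L}. Taking the base $H$ so that its degree distribution satisfies the EC condition (for instance a regular LDLC with $d \le 7$, by \refpro{pro_ec_d}) guarantees that the fraction of unresolved erased variables tends to $0$.

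The real content is the lifting from support-level resolution to recovery of the actual real components, and this is where $\theta_2/\theta_1 \notin \Q$ is used. In the construction of \refthm{thm_fulldiv_ml} every column of $H$ has nonzero entries in both the $\theta_1$-scaled and the $\theta_2$-scaled rows, because $A,B,C,D$ all have full rank and therefore have no zero columns; thus each erased component $x_m$ is constrained by at least one check of each type. When all other neighbours of such a check are already resolved, a $\theta_1$-check pins $x_m$ to a coset of $\tfrac{1}{\theta_1 h}\Z$ and a $\theta_2$-check pins it to a coset of $\tfrac{1}{\theta_2 h'}\Z$ with $h,h' \in \Q\setminus\{0\}$; since $\tfrac{\theta_1 h}{\theta_2 h'}$ is irrational, the two periods are incommensurable and the intersection of the cosets is a single point. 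This is precisely the localized version of the global argument $\theta_1\Q \cap \theta_2\Q = \{0\}$ used in \refthm{thm_fulldiv_ml}, so whenever the support-level decoder declares $x_m$ resolvable the real value is simultaneously determined uniquely. Combining the two steps, an erased block is recovered whenever the other block is known, so an uncorrectable event requires both fading blocks to be deep, an event of probability $\doteq \frac{1}{\gamma^2}$, which yields $P_{es} \doteq \frac{1}{\gamma^2}$.

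The main obstacle I expect is reconciling the two notions of peeling: binary erasure decoding resolves a variable as soon as a single check isolates it, whereas the real LDLC decoder needs two incommensurable checks to isolate the same variable before it can be declared known and propagated. I would handle this by importing the paired-check schedule used in \refapp{app_ldlc_iter} for \refthm{thm_fulldiv_bp} — the permutation pattern of \refeq{eqn_h_bp} is exactly a sufficient arrangement ensuring that each variable is isolated by one $\theta_1$-check and one $\theta_2$-check at the same stage — and then argue that any base LDLC meeting the EC condition under the scaling of \refthm{thm_fulldiv_ml} induces an equivalent two-sided peeling schedule, so that \refthm{thm_fulldiv_ml} is a genuine instance of the recovery mechanism already established in \refapp{app_ldlc_iter}.
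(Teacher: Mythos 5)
Your proposal takes a genuinely different route from the paper, but it contains a gap at exactly the point you flag as the ``main obstacle,'' and the fix you propose does not close it. The paper gives no standalone proof of \refthm{thm_fulldiv_bp_L_2}: it states that the proof follows the argument of \refapp{app_ldlc_iter}, which is a direct analysis of the iterative decoder's Gaussian-mixture messages. There, the incommensurability $\theta_1/\theta_2 \notin \Q$ is used to show that the product of check-node messages at a variable node has its dominant peak at $z_1=z_2=z_3=0$ (wrong-integer peaks contribute only an $e^{-\beta\gamma}$ term), and diversity $2$ is then read off from the mean of the surviving peak, $\hat m_3 \propto (\omega_1\alpha_1^2+\omega_2\alpha_2^2+\omega_3\alpha_1^2\alpha_2^2)x_3+\eta_3'$, whose coefficient is a generalized $\chi^2$ of order $4$. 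That quantitative, noise-aware step is what actually delivers $P_{es}\doteq 1/\gamma^2$; your replacement of it by ``an uncorrectable event requires both fading blocks to be deep'' is an outage heuristic, and the paper itself warns that the erasure picture can only be necessary: ``The EC condition is a necessary condition (but not sufficient).''

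The concrete gap is your bridging claim that ``any base LDLC meeting the EC condition under the scaling of \refthm{thm_fulldiv_ml} induces an equivalent two-sided peeling schedule.'' The DPE recursion \refeq{equ_DPE_L} tracks one-sided peeling: a variable is resolved as soon as \emph{any single} check isolates it. Your coset argument needs the stricter two-sided process in which each erased variable is isolated by one $\theta_1$-check \emph{and} one $\theta_2$-check, each with all other neighbours already resolved; convergence of the former does not imply convergence of the latter. (Your own full-rank observation only gives each column \emph{at least one} neighbour of each type somewhere in the graph, not that both become simultaneously available along the peeling schedule.) This pairing is precisely what the structured binary image of \refeq{eqn_h_bp} --- the permutation blocks $\Pi_k$ --- is engineered to guarantee in \refthm{thm_fulldiv_bp}, whereas \refthm{thm_fulldiv_ml} imposes no binary-image structure at all, only full rank of $A,B,C,D$. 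So as written, your argument proves full diversity only for matrices that happen to have the \refthm{thm_fulldiv_bp}-type structure, which is a different (and already proved) statement. To repair it you would have to either prove two-sided peeling convergence for the unstructured construction, or abandon the erasure reduction and run the message-product analysis of \refapp{app_ldlc_iter} directly on the \refthm{thm_fulldiv_ml} matrices --- which is exactly what the paper means by ``proved in a similar manner.''
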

% \begin{proof}

This theorem can be proved in a similar manner as \refthm{thm_fulldiv_bp} and the proof is omitted.
% \end{proof}
% 
% 
% 
% 
The construction methods proposed in \refthm{thm_fulldiv_bp} and \refthm{thm_fulldiv_bp_L_2} can be extended to arbitrary values of channel diversity $L$. In the following, we propose a LDLC construction method valid for the diversity order $L=4$ which is an extension of the method given in \refthm{thm_fulldiv_bp}.
% 
% The necessary condition EC for full-diversity is achieved if $\epsilon_i \rightarrow 0$ when $i \rightarrow +\infty$, $i$ being the decoding iteration number. 
Before we discuss the theorem for $L=4$, the DPE for $L=4$ from (\ref{equ_DPE_L}) needs to be examined from which we state following proposition:
\begin{proposition}\label{pro_ec_d_L4}
Consider a regular random LDLC ensemble with degree $d \ge 2$.
For $L=4$, the EC condition for full-diversity is satisfied
if and only if $d\le 3$.
\end{proposition}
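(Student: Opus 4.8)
The plan is to specialize the diversity population evolution recursion \refeq{equ_DPE_L} to the regular ensemble and reduce full diversity to a one–dimensional fixed–point question, exactly as was done for \refpro{pro_ec_d} and \refpro{pro_fd_L}. For a $d$–regular LDLC both edge–perspective degree distributions of the binary image are $\lambda(t)=\rho(t)=t^{d-1}$, and with $L=4$ we have $\epsilon_0=1-\tfrac14=\tfrac34$. Substituting into \refeq{equ_DPE_L} yields the scalar map
\begin{equation}
\epsilon_{i+1}=f_d(\epsilon_i),\qquad f_d(\epsilon)=\tfrac{3}{4}\bigl(1-(1-\epsilon)^{d-1}\bigr)^{d-1}.
\end{equation}
Since $f_d$ is continuous, strictly increasing on $[0,\tfrac34]$, and fixes $0$, the iteration started at $\epsilon_0=\tfrac34$ is monotone decreasing; it converges to $0$ (the EC condition holds) if and only if $f_d$ has no fixed point in $(0,\tfrac34]$, i.e.\ if and only if $f_d(\epsilon)<\epsilon$ for every $\epsilon\in(0,\tfrac34]$. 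The whole statement thus reduces to deciding, as a function of the integer $d$, whether the graph of $f_d$ stays strictly below the diagonal on $(0,\tfrac34]$.

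For the ``if'' part I would dispatch $d=2$ and $d=3$ directly. When $d=2$, $f_2(\epsilon)=\tfrac34\epsilon<\epsilon$, so the tunnel is trivially open. When $d=3$, factoring $1-(1-\epsilon)^2=\epsilon(2-\epsilon)$ gives $f_3(\epsilon)/\epsilon=\tfrac34\,\epsilon(2-\epsilon)^2$; one derivative shows this ratio attains its maximum on $(0,\tfrac34]$ at $\epsilon=\tfrac23$, where it equals $\tfrac89<1$. Hence $f_3(\epsilon)<\epsilon$ throughout and the EC condition holds for all $d\le 3$.

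For the ``only if'' part I would show that for every $d\ge 4$ the map $f_d$ crosses the diagonal, which by monotonicity traps the iterate away from $0$. The clean device is to evaluate at the single abscissa $\epsilon=\tfrac12$: there $f_d(\tfrac12)=\tfrac34\bigl(1-2^{-(d-1)}\bigr)^{d-1}$, and one checks that $\bigl(1-2^{-(d-1)}\bigr)^{d-1}$ increases with $d$ from the value $(7/8)^3=343/512>2/3$ attained at $d=4$, so $f_d(\tfrac12)>\tfrac12$ for all $d\ge 4$. Combined with $f_d(\tfrac34)=\tfrac34\bigl(1-4^{-(d-1)}\bigr)^{d-1}<\tfrac34$, the intermediate value theorem produces a fixed point $\epsilon^{\ast}\in(\tfrac12,\tfrac34)$; since $\epsilon_0=\tfrac34\ge\epsilon^{\ast}$ and $f_d$ is increasing, an easy induction gives $\epsilon_i\ge\epsilon^{\ast}>0$ for all $i$, so the EC condition fails. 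The main obstacle is exactly this uniform–in–$d$ claim: $f_d$ is \emph{not} monotone in $d$ at a generic $\epsilon$, because enlarging $d$ pushes the base $1-(1-\epsilon)^{d-1}$ toward $1$ while simultaneously raising the exponent $d-1$, and these two effects compete. Testing at $\epsilon=\tfrac12$ sidesteps the difficulty because there the base is exactly $1-2^{-(d-1)}$ and the sequence $\bigl(1-2^{-(d-1)}\bigr)^{d-1}$ is readily shown to be increasing (a short computation on its logarithm), so a single base case at $d=4$ together with this monotonicity closes the argument for all $d\ge 4$ at once.
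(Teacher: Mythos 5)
Your proposal is correct and follows exactly the route the paper intends: the paper merely asserts that \refpro{pro_ec_d_L4} follows from examining the DPE recursion \refeq{equ_DPE_L} with $\epsilon_0=1-\tfrac{1}{L}=\tfrac34$ and $\lambda(t)=\rho(t)=t^{d-1}$, and your fixed-point analysis of $f_d(\epsilon)=\tfrac34\bigl(1-(1-\epsilon)^{d-1}\bigr)^{d-1}$ (open tunnel iff $f_d(\epsilon)<\epsilon$ on $(0,\tfrac34]$) is precisely the rigorous fleshing-out of that claim, with all numerical checks ($8/9<1$ at $d=3$, $343/512>2/3$ at $d=4$, and the monotonicity of $(1-2^{-(d-1)})^{d-1}$) verifying correctly.
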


We can state following from \refpro{pro_ec_d_L4}: in case of $L=4$, the diversity tunnel is open only for regular random LDLC ensembles when $2 \le d \le 3$ and is closed for $d \ge 4$. However, as we observed in the case of $L=2$, irregular random LDLC may be useful to increase the upper limit for achievable $d$, e.g., for degree distribution $\lambda(t) = \rho(t) = 0.418683 \cdot t^2 + 0.162635 \cdot t^3 + 0.418683 \cdot t^5$ which has average degree $4$, the DPE tunnel is open whereas as per \refpro{pro_ec_d_L4}, it is closed for $4$-regular LDLC. With this, we propose a method for constructing LDLC for $L=4$ in the following theorem.
\begin{theorem} \label{thm_fulldiv_bp_L_4} 
% Consider a block-fading channel with diversity $L=4$. Let $H=[h_{ij}]$ be the regular $n \times n$ integer-check matrix of a real lattice $\Lambda$ of rank $n$ and degree $d(\le 3$) such that $n$ is divisible by four and $h_{ij} \in \Q$. 
Consider a block-fading channel with diversity $L=4$. Let $H=[h_{ij}]$ be the $n \times n$ integer-check matrix of a real Latin square LDLC $\Lambda$ of even rank $n$ and degree $3$ such that $n$ is divisible by four. 
Let us decompose $H$ into four $n/4 \times n$ submatrices as follows
\begin{align}
H = \left[
\begin{array}{c}
A \\ \hline
B \\ \hline
C \\ \hline
D
\end{array}
\right] \label{eq:fd_bp_H}.
\end{align}

\noindent
Assume that the binary image of $H$ has the following structure:
\[
H_b=\left[
\begin{array}{c c c c}
B_1 & \Pi_1 & 0 &  0 \\ \hline
 0  & B_2 & \Pi_2 &  0     \\ \hline
 0  & 0 & B_3 & \Pi_3 \\ \hline
 \Pi_4 & 0 & 0 & B_4 
\end{array}
\right].
\]
% where $\Pi_k, k \in \{1,2,3,4\}$ are permutation matrices and $B_k, k \in \{1,2,3,4\}$ 
% are random regular matrices with weight $d-1$ of dimension $n/4 \times n/4$. The
% coefficients of $H$ related to nonzero entries of $\Pi_k$ are set to $\frac{1}{\sqrt{d_{\theta}}}$ (where $d_{\theta}$ is defined as in \refthm{thm_fulldiv_ml}). 
% On the other hand, one of the nonzero coefficient per column and per row  of $B_k$ 
% is set to $1$ and the rest are set to $\frac{1}{\sqrt{d_{\theta}}}$. Due to such 
% structure, $H$ is an integer-check matrix of a Latin square LDLC \cite{Sommer2008}. 
% Then, the diversity order of the Latin square LDLC defined by integer-check 
% matrix $H$ is equal to the number of degrees of freedom in the channel, i.e., $L=4$.
% 
where $\Pi_k, k \in \{1,2,3,4\}$ are permutation matrices and $B_k, k \in \{1,2,3,4\}$ 
are random regular matrices with weight $2$ of dimension $n/4 \times n/4$. 
% The coefficients of $H$ related to nonzero entries of $\Pi_k$ are set to $\frac{1}{\sqrt{d_{\theta}}}$ (where $d_{\theta}$ is defined as in \refthm{thm_fulldiv_ml}). On the other hand, one of the nonzero coefficient per column and per row  of $B_k$ is set to $1$ and the rest are set to $\frac{1}{\sqrt{d_{\theta}}}$. 
% Due to such structure, $H$ is an integer-check matrix of a Latin square LDLC \cite{Sommer2008}. 
\\
Let $\theta_k, k \in \{1, 2, 3, 4\}$ be algebraic numbers of degree $\ge 2$
such that 
%for each $p \in \{1,2,3,4\} \text{ and } \forall q \in \{1,2,3,4\} \setminus p \text{ we have }
for 
$ p \ne q, \; \theta_p/\theta_q \notin \Q$. Then, the diversity order of the lattice defined by the
integer-check matrix
\begin{equation} \label{eq_H_FD_L_4}
\left[
\begin{array}{c}
\theta_1 A \\ \hline
\theta_2 B \\ \hline 
\theta_3 C \\ \hline
\theta_4 D
\end{array}
\right]
\end{equation}
is equal to the number of degrees of freedom in the channel, i.e., $L=4$.
% Then, the diversity order of the Latin square LDLC defined by integer-check matrix $H$ is equal to the number of degrees of freedom in the channel, i.e., $L=4$.
% 
\end{theorem}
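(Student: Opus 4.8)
The plan is to reproduce the block-erasure reduction that underlies \reflm{lm_fulldiv_ml} and \refthm{thm_fulldiv_bp}, now organized around the length-four cycle carried by the structure of $H_b$. As in those results, proving diversity order $L=4$ reduces to a recovery statement on the genie-aided block-erasure channel: the four column blocks $c_1,c_2,c_3,c_4$ (seeing the fadings $\alpha_1,\dots,\alpha_4$) are each either known exactly or erased, and I must show that the iterative decoder recovers every erased block whenever at most $L-1=3$ of them are erased. Granting this, an uncorrectable symbol forces all four blocks into deep fade simultaneously; since a single block fades with probability $\doteq \gamma^{-1}$ by \refeq{eq:ray_alpha_2} and the blocks are independent, that joint event has probability $\doteq \gamma^{-4}$ by the analysis of \refthm{thm_pol_L}, yielding $P_{es}\doteq \gamma^{-L}$. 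The iterative tracking of the recovered fraction is exactly the DPE of \refeq{equ_DPE_L}, which \refpro{pro_ec_d_L4} already shows has an open tunnel at $d=3$, $L=4$; the construction realizes this deterministically.

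First I would establish the combinatorial (EC) part from the cyclic shape of $H_b$. Reading the block rows, each check block $r_k$ (the $k$-th of $A,B,C,D$) is incident to column block $c_k$ through the weight-$2$ regular matrix $B_k$ and to block $c_{k+1}$ (indices mod $4$) through the permutation $\Pi_k$. Hence, whenever all variables of $c_k$ are known, every check of $r_k$ has a single unknown neighbour, sitting in $c_{k+1}$ via $\Pi_k$, so the whole of $c_{k+1}$ peels in one step. This produces the directed recovery cycle $c_1\to c_2\to c_3\to c_4\to c_1$: if at least one block survives, traversing the cycle from it recovers the remaining three, which is precisely every erasure pattern with $\le 3$ erased blocks. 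Only the permutation links drive the peeling, so the detailed structure of the $B_k$ is irrelevant here.

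Next I would lift this binary recovery to the real components using the incommensurability of the $\theta_k$. Writing a homogeneous lattice difference $\bs{\delta}$ supported on the erased blocks and feeding it into the scaled matrix \refeq{eq_H_FD_L_4}, each block-row constraint forces the incident coordinates of $\bs{\delta}$ into a coset of $\tfrac{1}{\theta_k}\Q$; for the block adjacent to the surviving one, the permutation-incident check pins it cleanly to $\tfrac{1}{\theta_k}\Z$. Since every $c_j$ is touched by two consecutive row blocks with distinct scalars $\theta_p,\theta_q$, the condition $\theta_p/\theta_q\notin\Q$ collapses $\tfrac{1}{\theta_p}\Q\cap\tfrac{1}{\theta_q}\Q=\{0\}$ exactly as in the proof of \refthm{thm_fulldiv_ml}, and propagating this along the cycle forces $\bs{\delta}=0$, i.e. uniqueness of the recovered point. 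The four numbers are required to be pairwise incommensurable precisely so that this cancellation is available on every arc, for every choice of surviving block.

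The hard part will be the interface between the two steps: showing that the \emph{local} belief-propagation schedule actually realizes the uniqueness that the global algebraic argument provides. A single integer check with one erased coordinate only determines it modulo $\tfrac{1}{\theta_k}\Z$, and the observation of that deeply faded coordinate is itself unreliable, so one must argue that the residual discrete ambiguity left after each peeling step is removed as the recovered value is consumed by the next check block around the cycle, and that any surviving ambiguity contributes to $P_{es}$ only at an order faster than $\gamma^{-4}$ (affecting the coding gain, not the slope). Making precise that the EC condition together with pairwise $\theta$-incommensurability is \emph{sufficient}, and not merely necessary, for the iterative decoder to attain diversity $L=4$ is the crux, and I would handle it with the same message-variance and outage bookkeeping used in \refapp{app_ldlc_iter} for \refthm{thm_fulldiv_bp}.
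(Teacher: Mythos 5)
Your combinatorial and algebraic ingredients are sound, and they do match what the construction is built around: the cyclic peeling $c_1\to c_2\to c_3\to c_4\to c_1$ is exactly the EC condition behind \refpro{pro_ec_d_L4} (and your observation that only the permutation links drive the peeling is correct), while the pairwise-incommensurability argument forcing $\bs{\delta}=0$ is the same algebra as in \refthm{thm_fulldiv_ml}. The genuine gap is the step you yourself flag as the crux and then defer. The claim in your first paragraph that diversity under \emph{iterative} decoding ``reduces to'' a recovery statement on a genie-aided block-erasure channel is never justified, and it cannot be taken for granted: the paper is explicit that the EC condition is \emph{necessary but not sufficient} for full diversity under belief propagation, and \reflm{lm_fulldiv_ml} (the rigorous erasure reduction) is proved only for ML decoding via pairwise error probabilities. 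The paper's own route for this theorem --- it refers the omitted proof to the argument of \refthm{thm_fulldiv_bp} in \refapp{app_ldlc_iter} --- is not an erasure reduction at all: it computes the Gaussian-mixture messages explicitly, uses $\theta_p/\theta_q\notin\Q$ to show the dominant peak after the variable-node product corresponds to the correct integers, and then reads the diversity off the combining statistic, whose mean is weighted by a positive combination of squared fadings (for $L=2$, $\omega_1\alpha_1^2+\omega_2\alpha_2^2+\omega_3\alpha_1^2\alpha_2^2$, a generalized $\chi^2$ of order $2L$). That computation, not the peeling, is what delivers $P_{es}\doteq\gamma^{-L}$, and your proposal contains no substitute for it.

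Moreover, deferring to ``the same message-variance and outage bookkeeping'' as \refapp{app_ldlc_iter} hides a difficulty specific to $L=4$ that your outline does not identify. In the $L=2$ case a single decoding iteration already couples both fadings at every variable node, which is why the order-$4$ $\chi^2$ appears immediately. With the cyclic $H_b$ here, a variable node in block $c_j$ is connected only to checks touching $c_{j-1}$, $c_j$, $c_{j+1}$ (indices mod $4$), so after one iteration its statistic involves only three of the four $\alpha_k^2$ and would exhibit diversity $3$. Dependence on the fourth fading arrives only after messages travel around the cycle, so the Appendix-style analysis must be carried over at least two iterations, tracking how means, variances, and peak amplitudes compose along the path from $c_{j+2}$ through $c_{j+1}$ to $c_j$, and verifying that the correct peak still dominates after this composition. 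Your peeling picture correctly anticipates \emph{why} this propagation succeeds, but the quantitative multi-iteration analysis is the actual substance of the proof, and it is missing from the proposal.
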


The proof for \refthm{thm_fulldiv_bp_L_4} is similar to \refthm{thm_fulldiv_bp} and is omitted. 
We remark that the diversity order of the LDLC constructed as per \refthm{thm_fulldiv_bp} and \refthm{thm_fulldiv_bp_L_4} 
is same as the number of degrees of freedom in the channel under ML decoding. 
Further, it can be noted that the assumption $h_{i,j} \in \Q$ used in \refthm{thm_fulldiv_ml} and \refthm{thm_fulldiv_bp} is not necessary and the construction methods proposed in these theorems are valid even if $h_{i,j}$ is an irrational number.
% 
% 
% 
% Further, it can be noted that the condition $h_{i,j} \in \Q$ in \refthm{thm_fulldiv_ml} to \refthm{thm_fulldiv_bp_L_4} is not a necessary condition. 
% Hence, if we select $h_{i,j}$ to be an irrational number and use methods from these theorems to construct LDLC then such LDLC are still full-diversity LDLC.
% 
% 
% 
% 
% 
% 
% 
% 
% 
% 
% 
% 
% 
% 
% 
% 
% 
\section{Simulation Results}
% In this section, we report simulation results for the full-diversity LDLC constructed according to \refthm{thm_fulldiv_ml}, \refthm{thm_fulldiv_bp} and \refthm{thm_fulldiv_bp_L_4}. 
% % which full-diversity under ML decoding and iterative BP decoding respectively. 
% We consider a channel with diversity order $L=2$ and $L=4$ (for iterative decoding only) in our simulations. 
In this section, we report simulation results for LDLC constructed for diversity order $L=2$ and $L=4$ (for iterative decoding only). We use methods proposed in \refthm{thm_fulldiv_ml}, \refthm{thm_fulldiv_bp} and \refthm{thm_fulldiv_bp_L_4} to construct LDLC. In all simulations, we count at least $400$ erroneous points for each SNR value.
% which full-diversity under ML decoding and iterative BP decoding respectively. 
% We consider a channel with in our simulations. 

As mentioned in \refsec{sec_channel_model}, for a given SNR value $\gamma$, the POL gives an upper bound on $\sigma^2$ and consequently a lower bound on the value of $\prod_{l=1}^{L}\alpha_l$. 
If the value of  $\prod_{l=1}^{L}\alpha_l$ is below this limit (i.e., an inadmissible fading channel state) then even an optimal ML decoder would almost surely make a decoding error. 
Hence, it is possible for ML decoder to output an error without decoding when channel is in 
inadmissible fading state. This fact is utilized here to speed up the ML decoder of full-diversity LDLC. We remark that the POL described in \refsec{sec:POL} does not take finite dimensions of lattices into consideration and hence distance between $P_{\textrm{out}}(\gamma)$ and PER curves reported here is an lower bound. 

% 
%This fact can be utilized to speed up the ML decoding of full-diversity  lattice codes over BF channels. %It is important to note that, as observed in practical situations, if the value of $\prod_{l=1}^{L}\alpha_l$ is below the POL (i.e., BF channel is in deep fading state) then the ML decoder will require much longer runtime to decode and it will ultimately output a wrong lattice point.
% because the received vector is distorted. 
%Hence, if an error can be reported without decoding the received vector in such instances then it is possible to speed up ML decoding. 
% We use this fact to reduce the runtime required for simulating different double diversity lattice codes. 

For ML decoding we utilize the integer-check matrix with dimension $n=64$ constructed according to the second matrix of \refeq{eq:h_fl_2} in \refthm{thm_fulldiv_ml}, where we select $\theta_1 = 1$ and $\theta_2 = \sqrt{2}$. We do not use any shaping region for the selected LDLC and decode using the ML decoder proposed in \cite{Viterbo1999}. 
We use a PC with Intel Xeon E5-2687W CPU clocked at 3.10 GHz.
Along with point error rate (PER) results for these LDLC, we also report results for total runtime required to complete simulations. 
% Note that, if the ML decoder is used for a given lattice code without taking into consideration the finiteness of the codebook of such codes (i.e., when no shaping region is used) then it is still possible to use the same decoder to decode lattice points transmitted from the corresponding lattices. 
% 
% 
% 
\begin{figure*}%[!t]
\vspace{-10pt}
\centering
\subfloat[Point Error Rate Performance without POL, with POL defined by $\prod_{l = 1}^{L} \alpha_l^2 =\frac{(2\pi e)^L}{\gamma^L}$,
and with POL+1.3 margin defined by $\prod_{l = 1}^{L} \alpha_l^2 ~=~1.3 \times \frac{(2\pi e)^L}{\gamma^L}$.]{
  \includegraphics[width=0.475\textwidth]{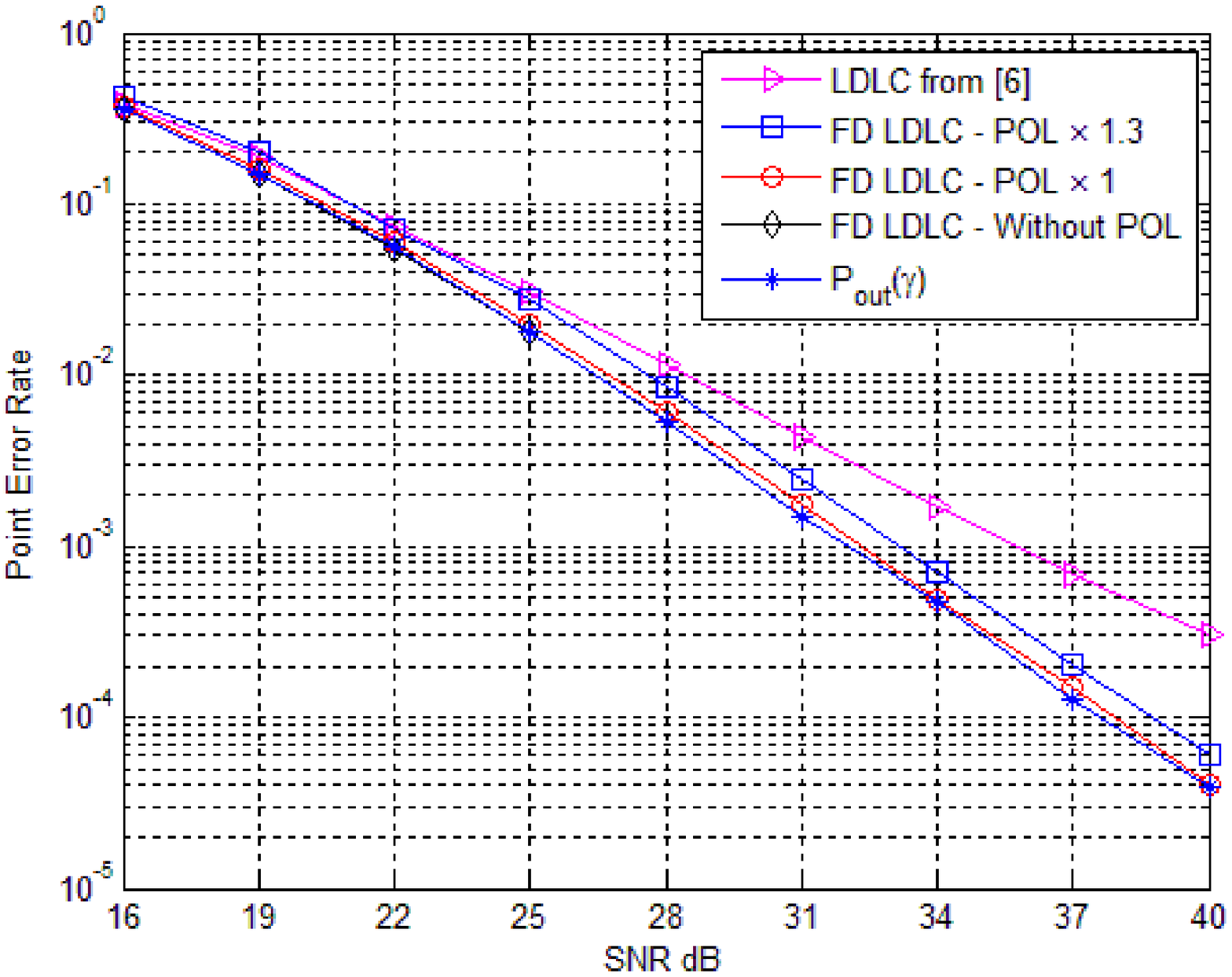}
}
\hfil
\subfloat[Runtime comparison at different values of signal-to-noise ratio. A huge POL gain in runtime is observed at $n=64$.]{
  \includegraphics[width=0.475\textwidth]{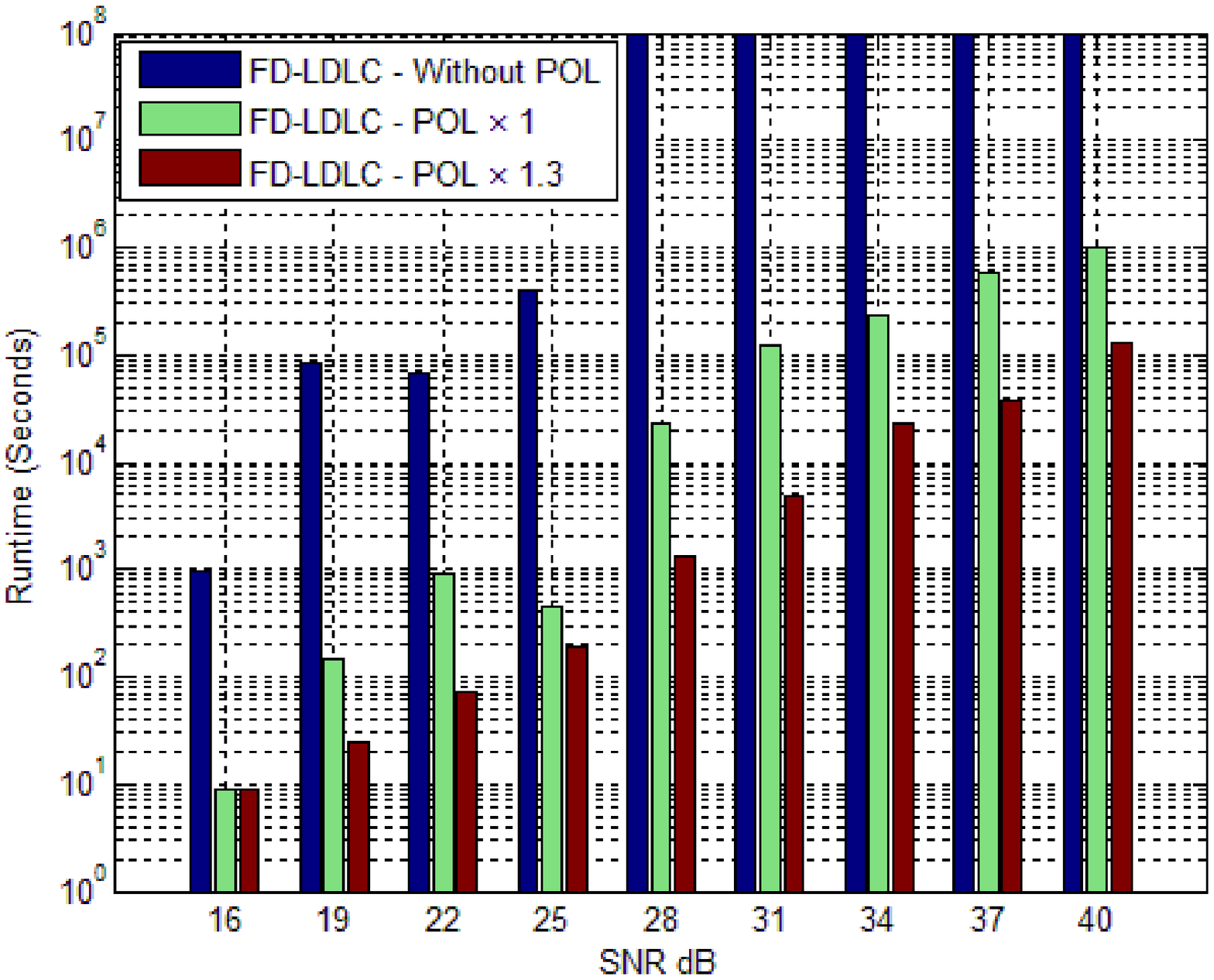}
}
\caption{ML decoding of double-diversity LDLC with dimension $n=64$.}
\label{fig:ML_64}
\vspace{-10pt}
\end{figure*}

Simulation results for the aforementioned LDLC are shown in \reffig{fig:ML_64}. We compare the point error rate and simulation runtime for programs that do not utilize POL with those where POL is utilized to declare error without decoding. Sphere decoder as proposed in \cite{Viterbo1999} is used for ML decoding. Four PER results are reported here : 
1) Random LDLC constructed as per Appendix VII in from \cite{Sommer2008}; 
2) FD-LDLC, POL is not used 
3) FD-LDLC, POL is used to detect inadmissible fading channel states; and 
4) FD-LDLC, POL is multiplied by a constant and the new value is used to detect inadmissible fading channel states.
$P_{\textrm{out}}(\gamma)$ is the POL for $L=2$ which also considers volume of lattices used here.
The curve for PER is parallel to that of POL and just $0.1$dB away from it.
When random LDLC is used over BF channel, the PER for such LDLC is not proportional to $\frac{1}{\gamma^2}$ at high SNR values and this fact can be observed from \reffig{fig:ML_64}(a). 
\begin{figure}[!t]
% \vspace{-10pt}
\centering
  \includegraphics[width=0.75\textwidth]{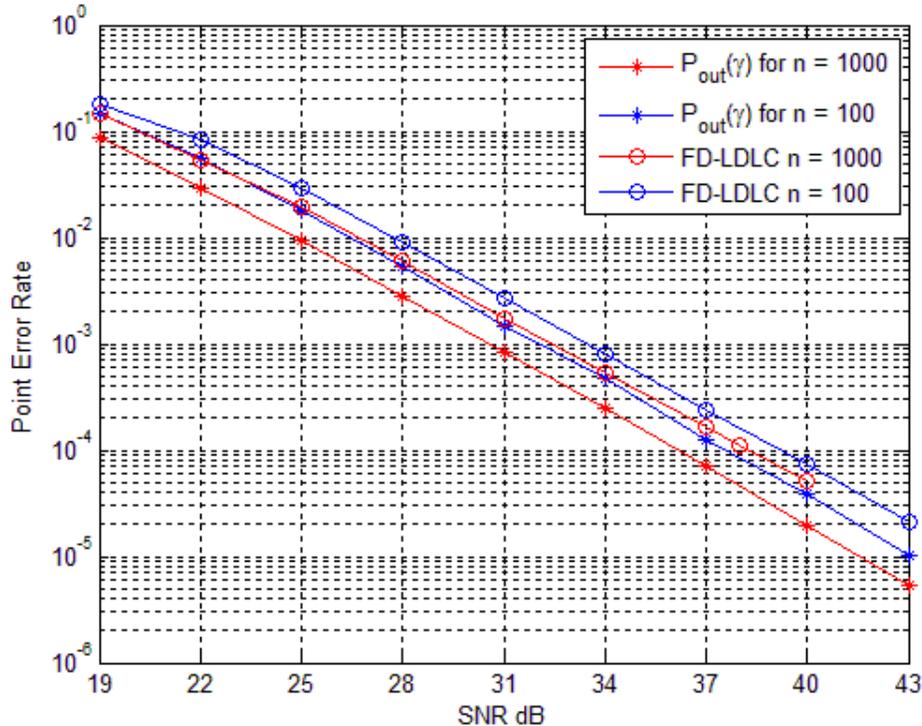}
\caption{Iterative decoding of double-diversity LDLC in dimension $n=100$ and $n=1000$. Point error rate is compared with respective POL.}
\label{fig:iter_L2}
\end{figure}

We compare accumulated runtime required by the decoder (until at least 400 erroneous points are obtained) in \reffig{fig:ML_64}(b). 
It can be observed that when either the POL (case 3) or a scaled POL (case 4) is used for simulations, we get significant improvements in terms of runtime. 
The impact on runtime of ML decoding which utilizes POL is clearly visible for the selected LDLC. For this LDLC, it is not possible to simulate the PER performance for SNR values higher than $25$dB without using POL in a feasible amount of time. 
The blue lines for SNR range $28$dB to $40$dB in \reffig{fig:ML_64}(b) are saturated to the maximum possible runtime and are shown for reference only. 
Also for SNR values less than $25$dB, %we observe that 
the runtime with POL is only $10\%$ of the runtime required without using POL. The difference in runtime between case 3 (POL only) and case 4 (scaled POL) for this particular LDLC is also very high.% and it takes only half the amount of time with POL scaled by $1.3$ when compared to using POL. 

\begin{figure}[!t]
% \vspace{-10pt}
\centering
  \includegraphics[width=0.75\textwidth]{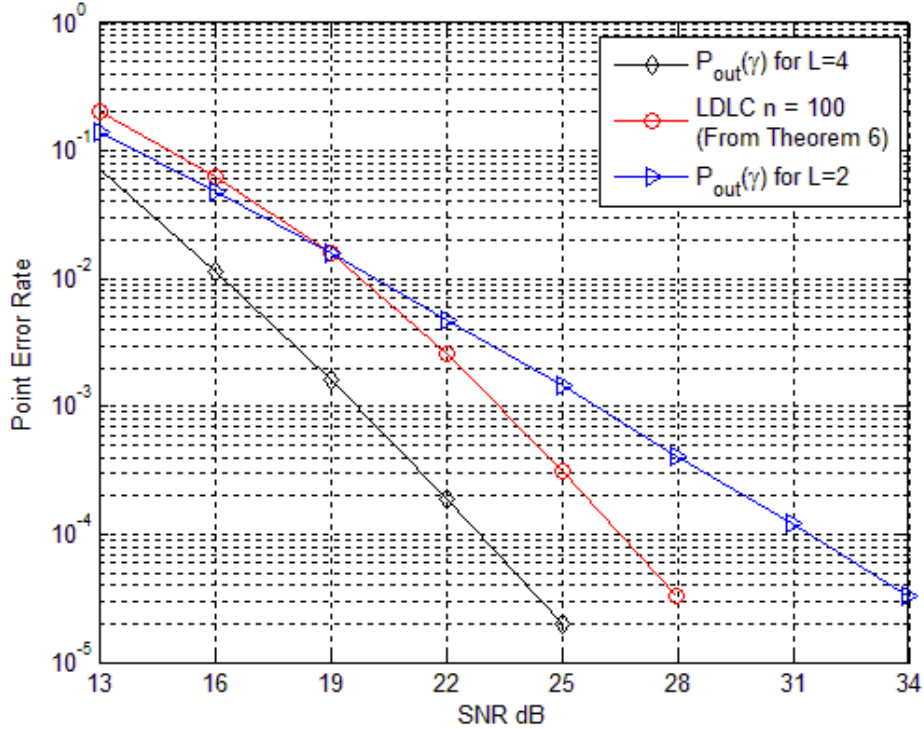}
\caption{Iterative decoding of LDLC constructed for diversity order $L=4$ in dimension $n=100$. Point error rate is compared with POL for $L=4$ and $L=2$.}
\label{fig:iter_L4}
\end{figure}

For the simulations using iterative decoding for $L=2$, we utilize two integer-check matrix constructed according to the first matrix of \refeq{eqn_h_bp} in \refthm{thm_fulldiv_bp}: 1)   $n = 100, d = 4$, where we select and $\theta_1 = 1$ and $\theta_2 = \frac{1}{\sqrt{2}}$, 2) $n = 1000, d = 5$ and $\theta_1 = 1$ and $\theta_2 = \sqrt{2}$. Generating sequence for both LDLC is $\left\{1, \frac{1}{\sqrt{5}}, \frac{1}{\sqrt{5}}, \frac{1}{\sqrt{5}}, \frac{1}{\sqrt{5}} \right\}$. Again, shaping region is not used during simulations.
% for this LDLC code. 
We use the iterative decoder for LDLC proposed in \cite{Sommer2008} with several modifications which include a method to reduce complexity of the iterative decoder as suggested in Appendix VIII of \cite{Sommer2008}. 
We selected pdf length of $2^{16} = 65,536$ and FFT size of $1,024$ for the iterative decoder. 
Simulation results for these LDLC are given in \reffig{fig:iter_L2}. 
% As can be seen, 
Two different curves for POL $\left(P_{\textrm{out}}(\gamma)\right)$ are reported in \reffig{fig:iter_L2} as the volume of two LDLC is different. The PER curves for both LDLC are parallel to the respective POL curves and hence these LDLC exhibit diversity order of $2$. The PER for $n = 100$ and $n=1000$ is around $1.5$dB and $2.1$dB away from the corresponding $P_{\textrm{out}}(\gamma)$.

Simulation results for the diversity order $L=4$ are given in \reffig{fig:iter_L4}. The LDLC used in this simulation is constructed using \refthm{thm_fulldiv_bp_L_4}. We selected a LDLC with $n=100, d = 3$, generating sequence $\left\{1, \frac{1}{\sqrt{3}}, \frac{1}{\sqrt{5}} \right\}$ and $\theta_1 = 1, \theta_2 = \frac{1}{\sqrt{2}}, \theta_3 = \frac{1}{\sqrt{3}}, \theta_4 = \frac{1}{\sqrt{7}}$. This LDLC is simulated using iterative decoder mentioned above which is adapted for $L = 4$ but with pdf length of $2^{18} = 262144$ and FFT size of $2048$. In this curve, PER is compared with the POL for channel diversity $L=2$ and $L=4$. The diversity order of $L=4$ for the aforementioned LDLC can be observed from \reffig{fig:iter_L4}. The PER for this LDLC is ca. $3$dB away from POL for $L=4$ however, it gains ca. $6$dB at SNR of $28$dB compared to POL for $L=2$.
\section{Conclusion and Future Work}
In this paper, first we defined a Poltyrev outage limit for lattices in presence of block fading. We proved that its diversity order is equal to the number of degrees of freedom in the BF channel. 
Further, we proposed construction methods for double-diversity lattices based on the integer-check matrix, the inverse of the lattice generator matrix. The first construction method is valid under ML decoding for both sparse and non-sparse integer-check matrices. 
The second construction method for double-diversity LDLC is based on sparse integer-check matrices and is valid for both, ML as well as iterative decoding. 
% Subsequently, a double-diversity Latin square LDLC construction method which is valid for both, the ML decoding and iterative decoding, is presented.
Further, one more method to construct LDLC with diversity order $4$ suitable for iterative decoding is proposed.
% 
% One more method to construct double-diversity LDLC is presented and such a LDLC is full-diversity under the ML decoding and iterative decoding. 
% This method is then extended to construct LDLC with diversity order $4$ for iterative decoding.
% 
% 

We also provided simulation results for ML decoding of double-diversity lattices and iterative decoding of lattices with diversity order $2$ and $4$. Our simulations results validate theoretical claims about the diversity order of the constructed lattices.
An important application of POL is lattice decoding of low-density lattice codes. 
The outage boundary of the POL is used during our simulations to declare an outage error without decoding, which makes the ML decoding of lattices on the BF channel tractable. 
%POL also allows iterative decoder to declare errors without decoding the received point when inadmissible channel states are detected. 

The POL derived in this paper assumes lattices with infinite dimension. One future research direction could be to derive a POL for lattices with finite dimension. The work in this paper focuses on diversity of lattice only. Another possible future work could be to take into account the coding gain of low-density lattices. For example, for a given $L$, find the matrix weight $d$ that maximizes coding gain under ML and iterative decoding.

%%%%%%%%%%%%%%%%%%%%%%%%%%%%%%%%%%%%%%%%%%%%%%%%%%%%%%%%%%%%%%%%%%%%%%%%%%%%%%%%%%%%%%%%%
\appendices
\section{Proof of \refthm{thm_fulldiv_bp} (Full-diversity LDLC for iterative decoding)}\label{app_ldlc_iter}
 Let us assume that the lattice $\Lambda$ is constructed according to the matrix given on the left side of \refeq{eqn_h_bp} and $d=3$ without loss of generality. Further, we assume that the  lattice of \refeq{eq:fd_bp_H} is a Latin square LDLC. 
 Since the error probability of the iterative decoder for LDLC is independent of the transmitted codeword \cite{Sommer2008},  we can further assume that some random point $\bs{x} = (x_1, \ldots, x_n)$ is transmitted. 
 Let us select a check node $i_1 \in \{n/4+1,\ldots,n/2\}$ which is connected to the variable nodes $j_1, j_2, j_3$. 
 The integer-check equation for the check node $i$ for the point $\bs{x}$ can be written as follows
 \begin{equation} \label{eq_check_1}
  \theta_1 \left( h_{i_1 j_1} x_{j_1} + h_{i_1 j_2} x_{j_2} + h_{i_1 j_3} x_{j_3} \right) = z_{i_1} \;,
 \end{equation}
%  Ideally, the right-hand side of the \refeq{eq_check_1} should be represented by some integer variable however, $\bs{x}$ is assumed to be the all-zero point and hence we set it to zero. 
 where $z_{i_1} \in \mathbb{Z}$. For ease of exposition, the subscript 
 %$i$ is dropped and 
 $j_1,j_2,j_3$ is replaced with $1,2,3$, respectively and $i_1$ is replaced by $1$. If \refeq{eq_check_1} is divided with $\theta_1 h_{13}$ then it can be written as
 \begin{equation}\label{eq_check_eq}
  x_{3} = \tilde{h}_{11} x_{1} + \tilde{h}_{12} x_{2} +  \tilde{h}_{13} \bar{\theta}_1 z_1\;,
 \end{equation}
  where $\tilde{h}_{11} = \frac{-h_{11}}{h_{13}}, \tilde{h}_{12} = \frac{-h_{12}}{h_{13}}, \tilde{h}_{13} = \frac{1}{h_{13}}$, and $\bar{\theta}_1 = \frac{1}{\theta_1} $. %The factor graph for this check node is shown in fig. 

  \noindent
  Similarly, select check nodes $i_2 \in \{1,\ldots,n/4\}$ and $i_3 \in \{n/2+1,\ldots,(n/2)+(n/4)\}$ such that $x_3$ is connected to the check nodes $i_2$ and $i_3$. Then, similar to \refeq{eq_check_eq} we get following equations,
  \begin{align*}%\label{eq_check_eq_p_q}
    &x_{3} = \tilde{h}_{24} x_{4} + \tilde{h}_{25} x_{5} +  \tilde{h}_{23} \bar{\theta}_1 z_2, \\
    &x_{3} = \tilde{h}_{36} x_{6} + \tilde{h}_{37} x_{7} +  \tilde{h}_{33} \bar{\theta}_2 z_3,
  \end{align*}
  where $\tilde{h}_{24} = \frac{-h_{24}}{h_{23}}, \tilde{h}_{25} = \frac{-h_{25}}{h_{23}}, \tilde{h}_{23} = \frac{1}{h_{23}}$, $\tilde{h}_{36} = \frac{-h_{36}}{h_{33}}, \tilde{h}_{37} = \frac{-h_{37}}{h_{33}}, \tilde{h}_{33} = \frac{1}{h_{33}}$, and $\bar{\theta}_2 = \frac{1}{\theta_2}$. Here again, we dropped $i$ and $j$ for ease of exposition.

  \noindent
  Point $\bs{x}$ is transmitted over a BF channel with coefficients $\alpha_1$ and $\alpha_2$. 
  The construction proposed in \refeq{eqn_h_bp} enforces that, for a given check node, exactly one component out of $x_1, x_2, x_3$ would be affected by one of the channel coefficients, whereas the remaining two components would be affected by the other channel coefficient. Let us assume that the components $x_1, x_2$ are affected by $\alpha_1$ and $x_3$ is affected by $\alpha_2$. Similarly, $x_4, x_6$ are affected by $\alpha_1$ and fading coefficient for $x_5, x_7$ is $\alpha_2$. Then, the components for the received point $y$ can be given by 
  \begin{align}
   &y_1 = \alpha_1 x_1 + \eta_1, \; y_2 = \alpha_1 x_2 + \eta_2, \; y_3 = \alpha_2 x_3 + \eta_3, \label{eq_received_vec_1}\\
   &y_4 = \alpha_1 x_4 + \eta_1, \; y_5 = \alpha_2 x_5 + \eta_5, \; y_6 = \alpha_1 x_6 + \eta_6, \; y_7 = \alpha_2 x_7 + \eta_7. \label{eq_received_vec_2}
  \end{align}
  where $\eta_{k} \sim \N(t; 0, \sigma^2)$, $k \in \{1,\ldots,7\}$ is the noise. 

  \noindent
  The LDLC iterative decoder initializes the message of variable nodes with a probability density function (pdf) determined from the components of the received point. 
  The pdf for components $x_1, \ldots, x_7$ can be found as follows from \refeq{eq_received_vec_1} and \refeq{eq_received_vec_2},
  \begin{align*}
     f_k(t) \sim \N \left(t; \frac{y_k}{\alpha_1}, \frac{\sigma^2}{\alpha_1^2} \right), \; k=1,2,4,6; \quad
%      
%      
%    f_1(t) &\sim \N \left(t; \frac{y_1}{\alpha_1}, \frac{\sigma^2}{\alpha_1^2} \right), f_2(t) \sim \N \left(t; \frac{y_2}{\alpha_1}, \frac{\sigma^2}{\alpha_1^2} \right) \\
%    %
   f_k(t) \sim \N \left(t; \frac{y_k}{\alpha_2}, \frac{\sigma^2}{\alpha_2^2} \right), \; k = 3,5,7.
  \end{align*}
  During the first half of the first iteration, the check node $1$ sends pdf $p_{12}(t)$ to variable node $3$. 
  The message $p_{12}(t)$ is obtained by first calculating the convolution \footnote{The convolution of $n$ Gaussians with mean values $m_1, \ldots, m_n$ and variances $\sigma_1^2,\ldots,\sigma_n^2$, respectively, is a Gaussian with mean $m_1+\ldots+m_n$ and variance $\sigma_1^2+\ldots+\sigma_n^2$ \cite{Sommer2008}\cite{Pa_84}.} of Gaussians $f_1\left( {\tilde{h}_{11}}^{-1}{t}\right)$ and $f_2\left( {\tilde{h}_{12}}^{-1} {t}\right)$ and then by extending the result to a periodic function with period $\bar{\theta}_1 \tilde{h}_3$. Hence,
  \begin{equation} 
   p_{12}(t) = \sum_{z_1 = -\infty}^{\infty} \N \left(t; m_{12}(z_1),\sigma_{12} \right),
%    = \sum_{z_1 = -\infty}^{\infty} \tilde{p}_{12} \left(t + \bar{\theta}_1 \tilde{h}_3 z_1 \right) 
  \end{equation}
  where 
  \begin{align*}
    &m_{12}(z_1) = \frac{y_1 \tilde{h}_{11} + y_2 \tilde{h}_{12} }{\alpha_1 } - { \bar{\theta}_1 \tilde{h}_{13} z_1} + \eta_{12} = \tilde{h}_{11} x_1 + \tilde{h}_{12} x_2 - { \bar{\theta}_1 \tilde{h}_{13} z_1} + \eta_{12},\\
    &\eta_{12} = \frac{\eta_1 \tilde{h}_{1} + \eta_2 \tilde{h}_{2}}{\alpha_1}, \; \sigma^2_{12} = \frac{ \left( {\tilde{h}}^2_{11} + {\tilde{h}}^2_{12} \right) \sigma^2} {\alpha_1^2}.
  \end{align*}
  Similarly, messages from check nodes $2$ and $3$ to variable node $1$ are,
  \begin{equation*} 
   p_{45}(t) = \sum_{z_2 = -\infty}^{\infty} \N \left(t; m_{45}(z_2),\sigma_{45} \right), \; p_{67}(t) = \sum_{z_3 = -\infty}^{\infty} \N \left(t; m_{67}(z_3),\sigma_{67} \right),
  \end{equation*}
  where
  \begin{align*}
    &m_{45}(z_2) = \tilde{h}_{24} x_4 + \tilde{h}_{25} x_5 + {\tilde{h}_{23} \bar{\theta}_1 z_2} + \eta_{45},\\ %
    &\eta_{45} = \frac{\eta_4 \tilde{h}_{24} \alpha_2 + \eta_5 \tilde{h}_{25} \alpha_1}{\alpha_1 \alpha_2}, %
    \sigma^2_{45} = \frac{ \left( \alpha_2^2 \tilde{h}^2_{24} + \alpha_1^2 {\tilde{h}}^2_{25} \right) \sigma^2} {\alpha_1^2 \alpha_2^2},
    \end{align*}
    and
  \begin{align*}  
    &m_{67}(z_3) = \tilde{h}_{36} x_6 + \tilde{h}_{37} x_7 + {\tilde{h}_{33} \bar{\theta}_2 z_3} + \eta_{67},\\%
    &\eta_{67} = \frac{\eta_6 \tilde{h}_{36} \alpha_2 + \eta_7 \tilde{h}_{37} \alpha_1}{\alpha_1 \alpha_2}, % 
    \sigma^2_{67} = \frac{ \left( \alpha_2^2 \tilde{h}^2_{36} + \alpha_1^2 {\tilde{h}}^2_{37} \right) \sigma^2} {\alpha_1^2 \alpha_2^2}.
  \end{align*}
  
  \noindent
  In second half of the first iteration, the variable node $3$ multiplies channel pdf $f_3(t)$ and check node messages $p_{12}(t), p_{45}(t)$ and $p_{67}(t)$ to generate the updated pdf $\tilde{f}_3(t)$ for the variable node $3$. An estimate of $x_3$, $\hat{x}_3$ is obtained by selecting the largest value in $\tilde{f}_3(t)$ \cite{Sommer2008}.
  The multiplication of Gaussian mixtures also generates a Gaussian mixture. Since check node messages are Gaussian mixtures, the multiplication of check node messages at variable node also generates a mixture of Gaussians. However, here we assume that only the largest peak is retained from this mixture whereas other peaks are attenuated to zero. Due to this operation, the resulting decoder is a sub-optimal iterative decoder. 
%   and hence the diversity order discussed below is for a sub-optimal iterative decoder. 
%   
  Exact expressions for the mean and variance of multiplication of Gaussian densities can be found in Claim 1 and Claim 2 of \cite{Sommer2008}. 

  \noindent
  We now assume without loss of generality that the transmitted point $x$ is the all-zeros point and prove that after multiplication and removal of smaller peaks, the only remaining peak in $\tilde{f}_3(t)$ is associated with $z_1 = z_2 = z_3 = 0$. For this, we derive the amplitude of one of the peaks of $\tilde{f}_3(t)$ \cite{Sommer2008} by selecting peaks associated with $m_{12}(z_1), m_{45}(z_2)$ and $m_{67}(z_3)$.
  \begin{align}
   &\N(t; m_3, \sigma_3^2) \cdot \N(t; m_{12}(z_1), \sigma_{12}^2) \cdot \N(t; m_{45}(z_2), \sigma_{45}^2)  \cdot \N(t; m_{67}(z_3), \sigma_{67}^2) \nonumber \\
   &= \hat{A}(z_1,z_2,z_3) \cdot \N (t; \hat{m}_3(z_1,z_2,z_3), \hat{\sigma}_3^2), \label{eq_mul_Gaussians}
  \end{align}
  where $\hat{A}(z_1,z_2,z_3)$, $\hat{m}_3(z_1,z_2,z_3)$, and $\hat{\sigma}_3^2$ is amplitude, mean, and variance of the Gaussian obtained by multiplying $m_{12}(z_1), m_{45}(z_2)$ and $m_{67}(z_3)$, respectively.
  Exact expression for \\ $\hat{A}(z_1,z_2,z_3)$ is given in \refeq{eq_A_hat}. As can be observed from \refeq{eq_A_hat}, we have 
  \begin{align*}
   x_3 \; = \; \tilde{h}_{11} x_1 + \tilde{h}_{12} x_2 + \bar{\theta}_1 \tilde{h}_{13} z_1
   \; = \; \tilde{h}_{24} x_4 + \tilde{h}_{25} x_5 + \bar{\theta}_1 \tilde{h}_{23} z_2
   \; = \; \tilde{h}_{36} x_6  + \tilde{h}_{37} x_7 + \bar{\theta}_2 \tilde{h}_{33} z_3, 
  \end{align*}
  if and only if $z_1 = z_2 = z_3 = 0$, which in turn maximizes $\hat{A}(z_1,z_2,z_3)$. If we select ${\theta_1}$ and ${\theta_2}$ such that $\frac{\theta_1}{\theta_2} \in \Q$ then either more than one set of values of $z_1, z_2, z_3$ or a single set of values other than $z_1 = z_2 = z_3 = 0$ may maximize $\hat{A}(z_1, z_2, z_3)$. Hence, it is necessary to select $\frac{\theta_1}{\theta_2} \notin \Q$ so that only an appropriate set of values of $z_1, z_2, z_3$ maximizes $\hat{A}(z_1, z_2, z_3)$.
  In some cases the noise terms, e.g, $\sigma_{3} - \sigma_{12}$, etc., in \refeq{eq_A_hat} may be high enough to shift optimal solution away from $z_1 = z_2 = z_3 = 0$. However, we note that the effect of noise in \refeq{eq_A_hat} on diversity order of the lattice under consideration can be ignored; reasons for which are explained at the end of this proof. Now, we prove that the lattice considered here indeed have diversity order $2$.

%   Consequently, $\frac{\theta_1}{\theta_2} \in \Q$ is a necessary condition 
%   The suboptimial decoder outputs an error that we are considering. 
  \begin{figure*}[!t]
  % ensure that we have normalsize text
  \normalsize
  % Store the current equation number.
%   \setcounter{MYtempeqncnt}{\value{equation}}
  % Set the equation number to one less than the one
  % desired for the first equation here.
  % The value here will have to changed if equations
  % are added or removed prior to the place these
  % equations are referenced in the main text.
  \begin{align}
   &\hat{A}(z_1,z_2,z_3) \propto \exp\left( -\frac{ \hat{\sigma_3} }{ 2 } \left( \frac{\left( m_3 - m_{12}(z_1) \right)^2 }{ \sigma_3 \sigma_{12} }% 
   +  \frac{\left(m_3 - m_{45}(z_2)\right)^2}{\sigma_3 \sigma_{45}} + \left( \frac{m_3 - m_{67}(z_3) }{ \sigma_3 \sigma_{67} } \right)^2 
   \right. \right. \nonumber \\
   &\quad\quad\quad\quad\quad\quad\quad\quad+\frac{ \left( m_{12}(z_1) - m_{45}(z_2) \right)^2}{ \sigma_{12} \sigma_{45} }  + \frac{ \left( m_{12}(z_1) - m_{67}(z_3) \right)^2}{\sigma_{12} \sigma_{67} }  
   \nonumber \\
   &\quad\quad\quad\quad\quad\quad\quad\quad
   +\left.\left.\frac{ \left( m_{45}(z_2) - m_{67}(z_3) \right)^2}{\sigma_{45} \sigma_{67}}  \right) \right). \nonumber \\
%    
% 
% 
% 
%    
% %    next equation
   &= \exp\left( -\frac{\hat{\sigma_3}}{2} \left( \frac{ \left( x_3 - (\tilde{h}_{11} x_1 + \tilde{h}_{12} x_2 + \bar{\theta}_1 \tilde{h}_{13} z_1) + \eta_{3}-\eta_{12}  \right)^2}{\sigma_3 \sigma_{12}}% 
   \right. \right. 
   \nonumber \\
% %    
   &\quad\quad\quad
   +\frac{ \left( x_3 - (\tilde{h}_{24} x_4 + \tilde{h}_{25} x_5 + \bar{\theta}_1 \tilde{h}_{23} z_2) + \eta_{3}-\eta_{45} \right)^2 }{\sigma_3 \sigma_{45}} 
   \nonumber \\
   &\quad\quad\quad
   +\frac{ \left( x_3 - (\tilde{h}_{36} x_6 + \tilde{h}_{37} x_7 + \bar{\theta}_2 \tilde{h}_{33} z_3) + \eta_{3}-\eta_{67} \right)^2 }{\sigma_3 \sigma_{67}}  
   \nonumber \\
% %    
% %    
% %    
% %    
% % %  
  &\quad\quad\quad
  +\frac{ \left( (\tilde{h}_{11} x_1 + \tilde{h}_{12} x_2 + \bar{\theta}_1 \tilde{h}_{13} z_1) - (\tilde{h}_{24} x_4 + \tilde{h}_{25} x_5 + \bar{\theta}_1 \tilde{h}_{23} z_2) + \eta_{12}-\eta_{45} \right)^2 }{\sigma_{12} \sigma_{45}} \nonumber \\
  &\quad\quad\quad
  +\frac{ \left( (\tilde{h}_{11} x_1 + \tilde{h}_{12} x_2 + \bar{\theta}_1 \tilde{h}_{13} z_1) - (\tilde{h}_{36} x_6 + \tilde{h}_{37} x_7 + \bar{\theta}_2 \tilde{h}_{33} z_3) + \eta_{12}-\eta_{67} \right)^2 }{\sigma_{12} \sigma_{67}} 
  \nonumber \\
  &\quad\quad\quad
  \left. \left. + \frac{ \left( (\tilde{h}_{24} x_4 + \tilde{h}_{25} x_5 + \bar{\theta}_1 \tilde{h}_{23} z_2) - (\tilde{h}_{36} x_6 + \tilde{h}_{37} x_7 + \bar{\theta}_2 \tilde{h}_{33} z_3) + \eta_{45}-\eta_{67} \right)^2}{\sigma_{45} \sigma_{67}}  \right) \right) \label{eq_A_hat}% 
  \end{align}
%   
% 
%   \setcounter{equation}{\value{MYtempeqncnt}}
  % IEEE uses as a separator
  \hrulefill
  % The spacer can be tweaked to stop underfull vboxes.
  \vspace*{4pt}
  \end{figure*}  

  \noindent
  The value of $\hat{m}_3(z_1, z_2, z_3)$ in \refeq{eq_mul_Gaussians} for $z_1 = z_2 = z_3 = 0$ can be calculated as follows,
  \begin{align}
    &\hat{m}_3(z_1 = 0, z_2 = 0, z_3 = 0) \propto \frac{m_3}{\sigma_3^2} + \frac{m_{12}(z_1=0)}{\sigma_{12}^2} + \frac{m_{45}(z_2=0)}{\sigma_{45}^2} + \frac{m_{78}(z_3=0)}{\sigma_{78}^2} \nonumber
%     ,
    \\
    &= \frac{x_3 \alpha_2^2}{\sigma^2} %
    + \frac{\left(\tilde{h}_1 x_1 + \tilde{h}_2 x_2 + \eta_{12} \right) \alpha_1^2}{\left( {\tilde{h}}^2_{11} + {\tilde{h}}^2_{12} \right) \sigma^2} %
    + \frac{\left(\tilde{h}_4 x_4 + \tilde{h}_5 x_5 + \eta_{45} \right) \alpha_1^2 \alpha_2^2 } { \left( \alpha_2^2 \tilde{h}^2_{24} + \alpha_1^2 \tilde{h}^2_{25} \right)  \sigma^2 } \nonumber\\%
    &\quad+ \frac{ \left(\tilde{h}_6 x_6 + \tilde{h}_7 x_7 + \eta_{67} \right) \alpha_1^2 \alpha_2^2 }{ \left( \alpha_2^2 \tilde{h}^2_{36} + \alpha_1^2 \tilde{h}^2_{37} \right) \sigma^2 } \nonumber%\\
  \end{align}
  which gives,
  \begin{align}
    \hat{m}_3(z_1 = 0, z_2 = 0, z_3 = 0) \propto \left( \omega_1 \alpha_1^2 + \omega_2 \alpha_2^2 + \omega_3 \alpha_1^2 \alpha_2^2 \right) x_3 + \eta_3^{\prime} \label{eq_m_3}
  \end{align}
  where,
  \begin{align*}
   &\omega_1 = \frac{1}{\left( {\tilde{h}}^2_{11} + {\tilde{h}}^2_{12} \right) \sigma^2}, \quad%
   \omega_2 = \frac{1}{\sigma^2}, \quad%
   \omega_3 = \frac{1}{ \left( \alpha_2^2 \tilde{h}^2_{24} + \alpha_1^2 \tilde{h}^2_{25} \right)  \sigma^2 } + \frac{1}{ \left( \alpha_2^2 \tilde{h}^2_{36} + \alpha_1^2 {\tilde{h}}^2_{37} \right) \sigma^2 }, \\
   &\eta_3^{\prime} = \frac{\alpha_1^2 \eta_{12}}{\omega_1} + \frac{\alpha_2^2 \eta_{3}}{\omega_2} + \frac{\alpha_1^2 \alpha_2^2 \left(\eta_{45} + \eta_{67}\right) }{\omega_3}
  \end{align*}
  We would like to decide $x_3$ from $\hat{m}_3$ given in \refeq{eq_m_3} but this equation informs us that this decision has‏ error rate behaving like $1/\gamma^2$ (See \cite[Sec. 13.4]{Proakis2008}) because of the generalized $\chi^2$ distribution of order $4$ in ‏ $(\omega_1 \alpha_1^2 + \omega_2 \alpha_2^2 + \omega_3 \alpha_1^2 \alpha_2^2)$. The fourth order $\chi^2$ distribution guarantees the double diversity for the sub-optimal iterative decoder. Since the sub-optimal iterative decoder achieves full diversity, the original iterative decoder is also guaranteed to achieve it.

  \noindent
  We stated earlier that the effect of noise in \refeq{eq_A_hat} on diversity order of the lattice can be ignored. There are two reasons for this; first, as mentioned in \refsec{sec_channel_model}, the diversity order of a lattice is defined for $\gamma \to \infty$ which means $\sigma^2 \to 0$. Hence, the effect of noise in \refeq{eq_A_hat} on diversity order of the lattice is negligible. To explain second reason we examine the error probability at the output of decoder, 
  \begin{align}
    P_e \le P(z = 0) \frac{1}{\gamma^2} + P(z \ne 0) = P(z = 0) \frac{1}{\gamma^2} + e^{-\beta \gamma}, \label{eq_P_e}
  \end{align}
  where $\beta$ is an appropriate constant. The error probability in \refeq{eq_P_e} is dominated by the first term which is related to the diversity order of the lattice. The second term has negligible effect on $P_e$ at high SNR and therefore, the effect of noise in \refeq{eq_A_hat} on diversity order can be ignored.

  \noindent
  The above analysis assumes $z = 0$. However, it remains valid also for any $z \in \mathbb{Z}^n\setminus\{0\}^n$ as for any other value of $z$, only $\omega_1, \omega_2, \omega_3$ and/or $\eta^{\prime}_{3}$ in \refeq{eq_m_3} would be affected and the $\chi^2$ distribution of order 4 appearing in \refeq{eq_m_3} would remain intact.
%%%%%%%%%%%%%%%%%%%%%%%%%%%%%%%%%%%%%%%%%%%%%%%%%%%%%%%%%%%%%%%%%%%%%%%%%%%%%%%%%%%%%%%%%

\ifCLASSOPTIONcaptionsoff
  \newpage
\fi
% references section
% can use a bibliography generated by BibTeX as a .bbl file
% BibTeX documentation can be easily obtained at:
% http://www.ctan.org/tex-archive/biblio/bibtex/contrib/doc/
% The IEEEtran BibTeX style support page is at:
% http://www.michaelshell.org/tex/ieeetran/bibtex/
%\bibliographystyle{IEEEtranTCOM}
% argument is your BibTeX string definitions and bibliography database(s)
%\bibliography{IEEEabrv,../bib/paper}
%
% <OR> manually copy in the resultant .bbl file
% set second argument of \begin to the number of references
% (used to reserve space for the reference number labels box)
%

% biography section
% 
% If you have an EPS/PDF photo (graphicx package needed) extra braces are
% needed around the contents of the optional argument to biography to prevent
% the LaTeX parser from getting confused when it sees the complicated
% \includegraphics command within an optional argument. (You could create
% your own custom macro containing the \includegraphics command to make things
% simpler here.)
%\begin{biography}[{\includegraphics[width=1in,height=1.25in,clip,keepaspectratio]{mshell}}]{Michael Shell}
% or if you just want to reserve a space for a photo:

% You can push biographies down or up by placing
% a \vfill before or after them. The appropriate
% use of \vfill depends on what kind of text is
% on the last page and whether or not the columns
% are being equalized.

%\vfill

% Can be used to pull up biographies so that the bottom of the last one
% is flush with the other column.
%\enlargethispage{-5in}

% that's all folks
\end{document}